\let\accentvec\vec               
\let\vec\accentvec
\newcommand{\WeakUntil}[2]{#1 \mathrel{W} #2}
\newcommand{\specCorrect}{\mathit{correct}}
\newcommand{\specEVC}{\mathit{correctEVC}} 
\newcommand{\specREG}{\mathit{correctREG}} 
\newcommand{\DThreads}{D_{\textsc{Threads}}}
\newcommand{\DTObs}{D_{\textsc{Threads} \uplus \{ \mathit{obs} \}}}
\newcommand{\seqCons}{\mathit{seqCons}} 
\newcommand{\Lin}{\mathit{Lin}}
\newcommand{\tab}{\hspace*{1.5em}}
\begin{document}
\pagestyle{plain}

\title{An Epistemic Perspective on Consistency of Concurrent Computations}

\author{Klaus v. Gleissenthall\inst{1} \and Andrey Rybalchenko\inst{1,2}}
\institute{Technische Universität M\"unchen \and Microsoft Research Cambridge}
\maketitle
\begin{abstract}

Consistency properties of concurrent computations, e.g., sequential
consistency, linearizability, or eventual consistency, are
essential for devising correct concurrent algorithms.  In this paper,
we present a logical formalization of such consistency properties that
is based on a standard logic of knowledge.
Our formalization provides a declarative perspective on what is
imposed by consistency requirements and provides some interesting
unifying insight on differently looking properties.


\end{abstract}

\section{Introduction}
\label{sec-intro}


Writing correct distributed algorithms is notoriously difficult. 
While in the sequential case, various techniques for proving
algorithms correct exist \cite{Hoare69,OHearn01}, in the
concurrent setting, due to the nondeterminism induced by
scheduling decisions and transmission failures,
it is not even obvious what correctness actually means.
Over the years, a variety of different \emph{consistency properties} 
restricting the amount of tolerated nondeterminism have been
proposed~\cite{Herlihy08,Herlihy87,Herlihy90,Lamport79,Papadimitriou79,shapiro09}. These
properties range from simple properties like sequential
consistency~\cite{Lamport79} or
linearizability~\cite{Herlihy87,Herlihy90} to complex conditions like
eventual consistency~\cite{shapiro09}, a distributed systems
condition.
Reasoning about these properties is a difficult, yet important task
since their implications are often surprising.

Currently, the study of consistency properties and the
development of reasoning tools and techniques for such
properties~\cite{Burckhardt10,Vafeiadis10,Qadeer09} is done for each property individually, i.e., on a
per property basis. To some extent, this trend might be traced back to
the way consistency properties are formulated. 
Typically, they explicitly require existence of certain computation
traces that are obtained by rearrangement of the trace that is to be
checked for consistency,
i.e., these descriptions of consistency properties do not rely on a
logical formalism. While such an approach provides fruitful grounds for the design of
specialized algorithms and efficient tools, it leaves open important
questions such as how various properties relate to each other or
whether advances in dealing with one property can be leveraged for
dealing with other properties.

In contrast to the trace based definitions found in literature, we propose
to study consistency conditions in terms of epistemic
logic~\cite{Fagin03,Halpern90}. 
Here we can rely on a distributed knowledge modality~\cite{Fagin03}, 
which is a natural fit for describing distributed computation.
In this logic, an application $D_G (\varphi)$ of the distributed
knowledge modality to a formula $\varphi$ denotes the fact that a
group $G$ \emph{knows} that a formula $\varphi$ holds.

We present a logical formalization of three consistency properties:
the classical sequential consistency~\cite{Lamport79} and
linearizability~\cite{Herlihy87,Herlihy90}, as well as a recently
proposed formulation~\cite{evCons} of eventual
consistency for distributed databases~\cite{shapiro09}.  
Our characterizations show that moving the viewpoint from reasoning
about traces (models) to reasoning about knowledge (logic) can lead to
new insights.
When formulated in the logic of knowledge, these differently looking
properties agree on a common schematic form: $\neg D_G (\neg
\specCorrect)$.  
According to this schematic form, a computation satisfies a
consistency property if and only if a group $G$ of its participants,
i.e., threads or distributed nodes, \emph{do not know that} the
computation violates a specification~$\specCorrect$ that describes
computations from the \emph{sequential} perspective, i.e., without
referring to permutations thereof.
For example, when formalising sequential consistency of a concurrent
register $\specCorrect$ only states that the first read operation
returns zero and any subsequent read operation returns the value
written by the latest write operation.

The common form of our characterizations exposes the differences
between the consistency properties in a formal way.
A key difference lies in the group of participants that provides
knowledge for validating the specification~$\specCorrect$. 
For example, a computation is sequentially consistent if it satisfies
the formula $\neg D_\textsc{Threads}(\neg \specCorrect)$, i.e., the group
$G$ of agents needed to validate the sequential specification
comprises the group of threads $\textsc{Threads}$ accessing the shared
memory. 
Surprisingly, the same group of agents is needed to validate eventual
consistency, since in our logic it is characterized by the formula
%
    $\neg D_\textsc{Threads}(\neg \specEVC)$. 
%
This reveals an insight that eventual consistency is actually not an
entirely new consistency condition, but rather an instance of
sequential consistency that is determined by a particular choice
of~$\specCorrect$. 
In contrast to the two above properties, the threads' knowledge is not
enough to validate linearizability. 
To capture linearizability, the set of participants $G$ needs to go
beyond the participating threads $\textsc{Threads}$ and include an
additional observer thread~$\mathit{obs}$ as well.
The observer only acquires knowledge of the relative order between
returns and calls. As logical characterization of linearizability, we
obtain $\neg D_{\textsc{Threads}\cup\{\mathit{obs}\}}(\neg
\specCorrect)$.

We show that including the observer induces a different kind of
knowledge, i.e., it weakens the modal system from \emph{S5} to
\emph{S4}~\cite{Ditmarsch08}.
As a consequence, the agents lose certainty about their
decision whether or not a trace is consistent. 
For sequential consistency ($\seqCons$) the agents know whether or not a trace is
sequentially consistent, i.e., the formula $ (\seqCons \leftrightarrow
D_\textsc{Threads}(\seqCons)) \mathrel{\land} (\neg\seqCons
\leftrightarrow D_\mathit{Threads}(\neg\seqCons))$ is valid.
In contrast, for linearizability ($\Lin$) the threads cannot be sure whether a trace they
validate as linearizable is indeed linearizable, i.e., there exists a
trace that satisfies $\Lin \wedge \neg \DTObs (\Lin)$.

The discovery that eventual consistency can be reduced to sequential
consistency is facilitated by a generalization of classical sequential
consistency that follows naturally from taking the epistemic
perspective. 
%
Our formalization of $\specCorrect$ for eventual consistency is given
by $\specEVC$ that requires nodes to keep consistent logs, i.e.,
whenever a transaction is received by a distributed node, the
transaction must be inserted into the node's logs in a way that is
consistent with the other nodes' recordings.
We allow $\specEVC$ not only to refer to events that are performed by
the nodes that take part in the computation, but also to auxiliary
events that model the \emph{environment} that interacts with nodes. 
We use the environment to model transmission of updates from one
distributed node to another. Our knowledge characterization then
implicitly quantifiers over the order of occurrence of such events,
which serves as a correctness certificate for a given trace.

\paragraph{Contributions}
In summary, our paper makes the following
contributions. 
We provide characterisations for sequential consistency (Section
\ref{sec:seqCons}), eventual consistency (Section \ref{sec:evCons})
and linearizability (Section \ref{sec:lin}) which we prove correct
wrt. their standard definitions. Our characterizations reveal a
remarkable similarity between consistency properties that is not
apparent in their standard formulations. 
Through our characterizations, we identify a natural generalization of sequential consistency that
allows us to reduce eventual consistency, a complex property usually
defined by the existence of two partial-ordering relations, to
sequential consistency. In contrast to this reduction, we show that
linearizability requires a different kind of knowledge than
sequentially consistency and prove a theorem (Section
\ref{sec:detect}) illustrating the ramifications of this difference.

\section{Examples}
\label{sec:illu}
In this section, before providing technical details, we give an informal overview of our characterizations.
\subsection{Sequential Consistency}
\paragraph*{Trace-Based Definition}
The most fundamental consistency condition that concurrent computation are intuitively expected to satisfy is sequential
consistency \cite{Lamport79}. Its original definition reads:
\begin{quote}
The result of any execution is the same as if the operations of all
the processors were executed in \emph{some sequential order}, and the
operations of each individual processor appear in the sequence in the
\emph{order specified by its program}.
\end{quote}
Equivalently, this more formal version can be found in the literature (cf., \cite{Attiya94}): For a trace $E$ to be sequentially consistent, it needs to satisfy two conditions: (1) $E$ must be \emph{equivalent} to a witness trace $E'$ 
and (2) trace $E'$ needs to be \emph{correct} with respect to some specification. 
To be equivalent, two traces need to be permutations that preserve the local order of events for each thread.
\begin{example}
\label{ex:seq}
Consider the following traces representing threads $t_1$ and $t_2$ storing and loading values on a shared register.
For the purpose of this example, we assume the register to be initialized with value~$0$. We use ``$:=$'' to abbreviate ``equals by definitions''.
\begin{equation*}
\begin{array}{rl}
E_1 :=& (t_2,\mathit{ld}(0)) \; (t_2,\mathit{ld}(1)) \; (t_1,\mathit{st}(1))\\[\jot]
E_2 :=& (t_2,\mathit{ld}(0)) \; (t_1,\mathit{st}(1)) \; (t_2,\mathit{ld}(1))\\[\jot]
E_3 :=&(t_2,\mathit{ld}(0)) \; (t_1,\mathit{st}(1)) \; (t_2,\mathit{ld}(2))
\end{array}
\end{equation*} 
Trace $E_1$ is sequentially consistent, because it is
equivalent to $E_2$ and $E_2$ meets the specification of a shared
register, i.e., each load returns the last value stored. In contrast,
$E_3$ is not sequentially consistent, because no appropriate witness
can be found. In no equivalent trace, $t_2$'s load of $2$ is
preceded by an appropriate store operation.
\end{example}
\paragraph{Logic} In this paper, in contrast to the above
trace-based formulation, we investigate consistency from the
perspective of \emph{epistemic logic}. Epistemic logic is a formalism used for reasoning
about the knowledge distributed nodes/threads acquire in a distributed
computation. For example, in trace $E_1$ thread $t_2$
\emph{knows} it first loaded value~$0$ and then value~$1$ while $t_1$
\emph{knows} it stored~$1$.  When we consider the knowledge acquired
by the threads $t_1$ and $t_2$ together as a group, we say that the
group of threads $\{t_1, t_2\}$ \emph{jointly knows} $t_2$ first
loaded~$0$ and then~$1$ while $t_1$ stored~$1$. We denote the
fact that a group $G$ jointly knows that a formula $\varphi$ holds
by~$D_G(\varphi)$, which is an application of the distributed
knowledge modality. 
According to our logical characterization of sequential consistency:
 $\neg D_\textsc{Threads}(\neg \specCorrect)$,
a trace is sequentially consistent, if the group of all threads accessing the shared data-structure does not jointly know that the trace is not correct. 
\addtocounter{example}{-1}
\begin{example}[continued]
This means trace $E_1$ is sequentially consistent. In trace $E_1$, the threads know that $t_2$ first loaded $0$ and then $1$ and that $t_1$ stored $1$, however they do not know in which order these events were scheduled. This means, for all they know $t_1$ could have stored $1$ before $t_2$ loaded it and after $t_2$ loaded~$1$, which would meet the specification. In contrast, $E_3$ is not sequentially consistent. The threads know that $t_2$ loaded $2$, however no thread stored it. This means $E_3$ cannot have met the specification.
\end{example}

\paragraph{Indistinguishability} We formalize this notion of knowledge in
terms of the local perspective individual threads have on the
computation. We extract this perspective by a function
$\downarrow$ such that $E \downarrow t$ projects trace $E$
onto the local events of thread $t$. 
If two traces do not differ from the local perspective of thread $t$,
we say that they are indistinguishable for $t$. We write $E \sim_t E'$
to denote that for thread $t$, trace $E$ is indistinguishable from
trace $E'$. Combining their abilities to distinguish traces, a group
of threads can distinguish two traces whenever there is a thread in
the group that can. We write $E \sim_G E'$ to denote that for
group~$G$ trace $E$ is indistinguishable from trace $E'$. 
Indistinguishability allows us to define the knowledge of a group. 
A group $G$ knows a fact $\varphi$ if this fact holds on all traces
that the threads in $G$ cannot distinguish from the actual trace. We
write $E \models \varphi$ to say that trace $E$ satisfies formula
$\varphi$. Formally (see Section \ref{def:semantics}):
$E \models D_G (\varphi)$ :iff for all $E'$ s.t. $E \sim_G E'$: $E' \models \varphi$, 
where we use ":iff" to abbreviate "by definition, if and only if".

\addtocounter{example}{-1}
\begin{example}[continued] For trace $E_1$, the thread-local
projections are: $E_1 \downarrow t_1 = (t_1,\mathit{st}(1))$ and $E_1
\downarrow t_2 = (t_2,\mathit{ld}(0)) (t_2,\mathit{ld}(1))$. We
get the same projections for $E_2$, and $E_3 \downarrow t_1 =
(t_1,\mathit{st}(1))$ and $E_3 \downarrow t_2 = (t_2,\mathit{ld}(0))
(t_2,\mathit{ld}(2))$. From these projections, we get: $E_1 \sim_{t_1}$ $E_2 \sim_{t_1} E_3$
and $E_1 \sim_{t_2} E_2$ but $E_1 \not \sim_{t_2} E_3$ and $E_2 \not
\sim_{t_2} E_3$.  For groups of threads, we have: $E_1 \sim_{\{ t_1,t_2 \}} E_2$
but $E_2 \not\sim_{\{ t_1,t_2 \}} E_3$, because $E_2 \not\sim_{t_2}
E_3$. 
We write $E \models \specREG$ to say $E$ is correct with respect
to the specification of a shared register. Then $E_1 \models \neg
\DThreads (\neg \specREG)$, $E_2 \models \neg
\DThreads (\neg \specREG)$ and $E_3 \models \DThreads (\neg
\specREG)$.
\end{example}

\paragraph{Knowledge in the Trace-Based Formulation} Interestingly, the notion of equivalence found in the trace-based formulation of sequential consistency precisely corresponds to $\sim_\textsc{Threads}$, the indistinguishability relation of all threads accessing the shared data-structure. This suggests that the knowledge-based formulation of consistency lies already buried in the original definition. Similarly, the formulation ``The result of any execution is the same as if ...", found in the original definition alludes to the possibility of a fact $\varphi$, which, in epistemic logic, is represented by the dual modality of knowledge $\neg D_G (\neg \varphi)$.

\subsection{Eventual Consistency}
Eventual consistency~\cite{shapiro09} is a correctness condition for distributed database systems, as those employed in modern geo-replicated internet services. In such systems, threads (distributed nodes) keep local working-copies (repositories) of the database which they may update by performing a commit operation. Queries and updates have revision ids, representing the current state of the local copy. 
Whenever a thread commits, it broadcasts local changes to its repository and receives changes made by other threads.
After the commit, a new revision id is assigned. As the underlying network is unreliable, committed changes may however be delayed or lost before reaching other threads. 

In this setting, weaker guarantees on consistency than in a multi-processor environment are required, as network partitions are unavoidable, causing updates to be delayed or lost. Consequently, eventual consistency is a prototypical example for what is called ``weak''-consistency. We present a recent, partial-order based definition drawn from the literature~\cite{evCons} in Section \ref{sec:evCons}.

Taking the knowledge perspective on eventual consistency reveals a remarkable insight. Eventual consistency is actually not a new, weaker consistency condition, but just sequential consistency -- with an appropriate sequential specification. 

In our logical characterization, eventual consistency is defined by the formula:
$E \models \neg \DThreads (\neg \specEVC$).
That is, to be eventually consistent, a trace needs to be sequentially consistent with respect to a sequential specification $\specEVC$.  Our formula for $\specEVC$ uses the past time modality $\boxminus (\varphi)$ (see Section \ref{def:semantics}), representing the fact that so far, formula $\varphi$ was true.  
We specify $\specEVC$ by:
\begin{equation*}
\specEVC := 
  \begin{array}[t]{l@{}}
  \begin{array}[t]{l@{}}
  \forall t \forall q \forall r ( \boxminus( \mathit{query}(t,q,r) \rightarrow 
	 \exists \mathcal{L}( \mathcal{L} \; \mathit{validLog} \; t \;  \wedge  \mathit{result}(q,\mathcal{L},r)))) \\ [\jot]
 \wedge \; \mathit{atomicTrans} \wedge \mathit{alive} \wedge \mathit{fwd} 
  \end{array}
\end{array}
\end{equation*}
This formula says that for all threads, queries and results, so far, whenever a thread~$t$ posed a query~$q$ to its local repository, producing result~$r$, thread~$t$ must be able to present a valid log $\mathcal{L}$, such that the result of posing query~$q$ on a machine that performed only the operations logged in log~$\mathcal{L}$ matches the recorded result~$r$. The additional conjuncts  $\mathit{atomicTrans}$, $\mathit{alive}$ and $\mathit{fwd}$ specify further requirements on the way updates may be propagated in the network. 

In our characterization, a log~$\mathcal{L}$ is a sequence of actions (i.e., queries, updates and commits). The formula $\mathit{validLog}$ describes the  conditions a log has to satisfy to be valid for a thread~$t$:
\begin{center}
$\mathcal{L} \; \mathit{validLog}  \; t \; := \; \forall a(a \; \mathit{in} \; \mathcal{L} \leftrightarrow t \; k_{\mathit{log}} \;  a ) \; \wedge \mathit{consistent}(\mathcal{L})$
\end{center}
This formula requires that for all actions~$a$, $a$ is logged in~$\mathcal{L}$ (represented by the infix-predicate $\mathit{in}$) if and only if thread~$t$ knows about action~$a$. A thread knows about all the actions that it performed itself and the actions performed in revisions that were forwarded to it. The formula $\mathit{consistent}({\mathcal{L}})$ ensures that all actions in the log~$\mathcal{L}$ appear in an order consistent with the actual order of events.

\paragraph{Environment Events} 
To make this result possible, we make a generalization that comes natural in the knowledge setting. We allow traces to contain \emph{environment} events that represent actions that are not controlled by the threads that participate in the computation. In our characterization, environment events are used to mark positions where updates were successfully forwarded from one client to another. By allowing $\specEVC$ to refer to those events, we implicitly encode an existential quantification over all possible positions for these events. That means a trace is eventually consistent if any number of such events could have occurred such that the specification $\specEVC$ is met.

\begin{example}
 Consider the following traces of a simple database that allows clients to update and query the integer variable $x$:
\begin{equation*}
\begin{array}{rl}
E_{4} :=&(t_1, \mathit{up}(0,x:=0)) \;  (t_1,\mathit{com}(0))\; (t_1,\mathit{up}(1,x:=1))
 (t_1,\mathit{com}(1))  \\ [\jot] 
& (t_2,\mathit{qu}(0,x,0)) (t_2,\mathit{com}(0)) \; (t_2,\mathit{qu}(1,x,1)) \\  [\jot]
E_{5} :=& (t_1, \mathit{up}(0,x:=0)) \;  (t_1,\mathit{com}(0)) \; (\mathit{env},\mathit{fwd}(t_1,t_2,0))
 (t_1,\mathit{up}(1,x:=1)) \\ [\jot]
 & (t_1,\mathit{com}(1)) \;  (t_2,\mathit{qu}(0,x,0)) 
(t_2,\mathit{com}(0)) \; (\mathit{env},\mathit{fwd}(t_1,t_2,1)) \\ [\jot]
& (t_2,\mathit{qu}(1,x,1))
\end{array}
\end{equation*}
Updates are of the form $\mathit{up}(\mathit{id},u)$, where $\mathit{id}$ is the revision-id and $u$ the actual update. In our example, updates are variable assignments $x:=v$ meaning that a variable~$x$ is assigned value $v$. Queries are of the form $\mathit{qu}(\mathit{id},q,r)$, where $\mathit{id}$ stands for the revision-id, $q$ for the query, and $r$ for the result. Queries in our example consist only of variables, i.e., a query returns the current value assigned. The action $\mathit{com}(\mathit{id})$ represents the act of committing, that is, sending revision $\mathit{id}$ over the network and checking for updates. 
Forwarding actions are performed by the environment $\mathit{env}$. The event $(\mathit{env},\mathit{fwd(t,t',\mathit{id})})$ represents the environment forwarding the changes made in revision $\mathit{id}$ from thread~$t$ to thread~$t'$. 

In trace $E_{5}$, when thread~$t_2$ queries the value of $x$ in revision $0$, thread~$t_2$ can present the log
$\mathcal{L} := \mathit{up}(0,x:=0) \; \mathit{com}(0) \; \mathit{qu}(0,x,0)$
as an evidence of the correctness of the result $0$. As by the time of $t$'s query, only revision $0$ has been forward from $t_1$ to $t_2$, thread $t_2$ only knows about $t_1$'s first update and its own query. Querying $x$ after the update $x:=0$ yields $0$, so $\mathit{result}(x,\mathcal{L},0)$ holds.

When thread~$t_2$ queries $x$ in revision $1$, thread~$t_1$'s second update has been forwarded, so $t_2$ can present the log
$\mathcal{L} :=  \mathit{up}(0,x:=0) \; \mathit{com}(0) \; \mathit{up}(1,x:=1) \;  \mathit{com}(1) \;
\mathit{qu}(0,x,0) \; com(0) \; \mathit{qu}(1,x,1)$.
Since $t_2$ received the $t_1$'s revision~$1$ the log contains the second update $x:=1$ and $t_2$'s query of $x$ returns $1$. This means $E_5 \models \specEVC$.
As a consequence, we have $E_4 \models \neg \DThreads (\neg \specEVC)$, because $E_4 \sim_{\textsc{Threads}} E_5$ and $E_5 \models \specEVC$. The forwarding events in $E_5$ mark positions where the transmission of updates through the network could have occurred to make the computation meet $\specEVC$.
\end{example}

\subsection{Linearizability}
While the threads' knowledge characterizes sequential consistency and eventual consistency, their knowledge is not strong enough to define linearizability. Linearizability extends sequential consistency by the requirement that method calls must effect all visible change of the shared data at some point between their invocation and their return. Such a point is called the \emph{linearization points} of the method. 

To characterize linearizability, we introduce another agent called \emph{the observer} that tracks the available information on linearization points. To do this, the observer monitors the order of non-overlapping (sequential) method calls in a trace. The observer's view of a trace is the order of non-overlapping method calls. This order is represented by a set of pairs of return and invoke events, such that the return took place before the invocation. We extract this order by a projection function $\mathit{obs}(\cdot)$.
\begin{example}
Consider the following traces where method calls are split into invocation- and return events: 
\begin{equation*}
\begin{array}{rl}
E_6:=&(t_2, \mathit{inv} \;  \mathit{ld}()) \; (t_2, \mathit{ret} \; \mathit{ld}(1)) \; (t_1, \mathit{inv} \;  \mathit{st}(1)) \;
(t_1,\mathit{ret} \;  \mathit{st}(\mathit{true}))\\[\jot]
E_7:=&(t_2, \mathit{inv} \; \mathit{ld}()) \; (t_1, \mathit{inv} \;  \mathit{st}(1)) \; (t_2, \mathit{ret} \; \mathit{ld}(1)) \;
(t_1,\mathit{ret} \;  \mathit{st}(\mathit{true})) \\[\jot]
 E_8 :=&(t_1, \mathit{inv} \; \mathit{st}(1)) \; (t_1,\mathit{ret} \;  \mathit{st}(\mathit{true})) \; (t_2, \mathit{inv} \;  \mathit{ld}()) \;  (t_2, \mathit{ret} \; \mathit{ld}(1))
\end{array}
\end{equation*}
For trace $E_6$, the observer's projection function $\mathit{obs}(\cdot)$ yields:
$\mathit{obs}(E_6) = \{ ( \; (t_2, \mathit{ret} \; \mathit{ld}(1))  ,  (t_1, \mathit{inv} \;  \mathit{st}(1)) \; ) \}$.
This means the observer sees that $t_2$'s load returned before $t_1$'s store was invoked. In trace $E_7$, the method calls overlap. Consequently, the observer knows nothing about this trace:
$\mathit{obs}(E_7) := \varnothing$.
For $E_8$, we get $\mathit{obs}(E_8) = \{ ( \; (t_1, \mathit{ret} \; \mathit{st}(\mathit{true})) ,  (t_2, \mathit{inv} \;  \mathit{ld}()) \; ) \}$.
\end{example}
The observer's view tracks the available information on linearization points. In trace $E_6$, thread $t_2$'s linearization point for the call to load must have occurred before the linearization point of $t_1$'s call to store. This follows from the fact that $t2$'s load returned before $t1$'s call to store and that linearization point must occur somewhere between a method's invocation and its return. In trace $E_7$ linearization points may have occurred in any order as the method calls overlap. 

To the observer, a trace $E$ is indistinguishable from a trace $E'$ if the order of linearization points in $E$ is preserved in $E'$ and maybe an order between additional linearization points is fixed (see Section \ref{sec:indist}):
$E \preceq_{obs} E' \; \text{:iff} \; \mathit{obs}({E}) \subseteq \mathit{obs}(E')$.
A trace $E$ is linearizable if the threads together with the observer do not know that the trace is incorrect:
$E \models \neg \DTObs (\neg \specCorrect)$.

\addtocounter{example}{-1}
\begin{example}[Continued]
We have
$E_6 \not \preceq_{obs} E_7 \; \text{, but} \; E_7 \preceq_{obs} E_6$.
Trace $E_7$ is linearizable since $E_7 \sim_{\textsc{Threads} \uplus \{ \mathit{obs}\}} E_8$ and $E_8 \models \specREG$. However, trace $E_6$ is not linearizable since there is no indistinguishable trace that meets the specification. Note that the threads without the observer could not have detected this violation of the specification, i.e., $E_6 \models \neg \DThreads \neg \specREG$.
\end{example}

\subsection{Knowledge about Consistency}
As we describe sequential consistency in a standard logic of
knowledge, corresponding axioms apply (see, e.g. \cite[chapter
2.2]{Ditmarsch08}). 
For example, everything a group of threads knows is also true:
$\text{(T)} := \; \models D_G (\varphi) \rightarrow \varphi \; \text{(Truth axiom)}$,
groups of threads know what they know:
$\text{(4)} := \; \models D_G (\varphi) \rightarrow D_G (D_G (\varphi)) \; \text{ (positive introspection)}$
and groups of threads know what they do not know:
$\text{(5)} := \; \models \neg D_G (\varphi) \rightarrow D_G (\neg D_G (\varphi)) \;  \text{ (negative introspection)}$.
For an complete axiomatization of a similar epistemic logic with time see \cite{Belardinelli08}.

Interestingly, adding the observer not only strengthens the threads' ability to distinguish traces but changes the \emph{kind} of knowledge agents acquire about a computation. Whereas $\sim_\textsc{Threads}$ is an \emph{equivalence relation}, $\sim_{\textsc{Threads} \uplus \{ \mathit{obs}\}}$ is only a \emph{partial order}. As a consequence, $\DThreads$ corresponds to the modal system \emph{S5}, whereas $\DTObs$ corresponds to the weaker system \emph{S4} \cite{Ditmarsch08}. This means, that $\DTObs$ does not satisfy the axiom of negative introspection (5). 

It seems natural to ask if the differences in the type of knowledge between sequential consistency and linearizability affect the ability to detect violations of the specification. In Section \ref{sec:detect}, we show that the difference the lack of axiom (5) makes, lies in the certainty threads have about their decision. Whereas for sequentially consistent ($\seqCons:= \neg \DThreads (\neg correct)$), whenever the threads decide that a trace is sequentially consistent, they can be sure that the trace is indeed sequentially consistent:
%
    $(\seqCons \leftrightarrow D_\textsc{Threads}(\seqCons))$ 
%
for linearizability ($\mathit{Lin} := \neg D_{\textsc{Threads} \uplus \{\mathit{obs} \}} (\neg \mathit{correct})$), it can occur that the threads together with the observer decide that a trace is linearizable, however, they cannot be sure that it really was:
%
 $ \Lin \wedge \neg \DTObs (\Lin)$.
%
\section{Logic Of Knowledge}
\label{sec:log}
In this section we present a standard logic of knowledge (see
\cite{Halpern90}) that we use for our characterizations. 
We follow the exposition of \cite{kramer10}. We define the set
$\mathcal{E}$ of events as $\mathcal{E} \ni e := (t,\mathit{act})$, representing $t \in
\textsc{Threads} \uplus \{ \mathit{env}\}$ performing an action $\mathit{act} \in
\mathcal{A}$. The environment $\mathit{env}$ can perfom
synchronization events that go unseen by the threads. In our
characterization of eventual consistency, the environment forwards
transactions from one node to the other.
We define the generic set of actions:
$\mathcal{A} \ni \mathit{act}:= \; \mathit{inv}(m,v)  \; | \; \mathit{ret}(m,v)$.
Threads can invoke or return from methods $m \in \textsc{Methods}$ with $v \in
\textsc{Values}$. For our characterization of
eventual consistency, we instantiate $\mathcal{A}$
with application-specific actions. These can easily be translated back into
the generic form by splitting up events into separate invocation-
and return-parts. 
\subsection{Preliminaries}
\label{sec:prelim}
We denote by $\mathcal{E}^\ast$ the set of finite-, and by $\mathcal{E}^\omega$ the set of infinite sequences over $\mathcal{E}$.
We denote the empty sequence by $\epsilon$. Let $\mathcal{E}^\infty :=
\mathcal{E}^\ast \; \uplus \; \mathcal{E}^\omega$ and $E
\in\mathcal{E}^\infty$. Then  $E \downharpoonright i$ denotes the
finite prefix up to- and including $i$. We let $E@i$ be the element of
sequence $E$ at position $i$. We define $\mathit{len}(E)$ to be the
length of $E$, where $\mathit{len}(\epsilon)=0$, and $\mathit{len}(E)=
\omega$, if $E \in E^\omega$. For $e \in \mathcal{E}$, we say that $\mathit{pos}(e,E)= j$, if  $E@j=e$ and $\mathit{pos}(e,E)=\omega$ otherwise. Hence, we write $e \in E$ if $\mathit{pos}(e,E) < \omega$. 

\label{sec:indist}
We formally define projection functions and indistinguishability
relations. A thread's view of a computation trace is the part of the trace it can
observe. We define this part by a projection function that extracts
the respective events.  We use this projection function to define an
indistinguishability relation for each thread. 

\paragraph*{Thread Indistinguishability Relation}
For a thread $t \in$ \textsc{Threads} the indistinguishability
relation $ \sim_t \; \subseteq (\mathcal{E}^\omega \times (\mathbb{N}
\uplus \{ \omega )\})^2$ is defined such that: 
$ (E,i) \sim_t (E',i') \mbox{ :iff }  (E \downharpoonright i)
\downarrow t = (E' \downharpoonright i') \downarrow t$
where $\downarrow : (\mathcal{E}^\infty \times \textsc{Threads}) \to
\mathcal{E}^\infty$ designates a projection function onto $t$'s local
perspective. $E \downarrow t$ is the projection on events in the set
$\{ (t,\mathit{act}) \; | \; \mathit{act} \in \mathcal{A} \}$, i.e.,
the sequence obtained from $E$ by erasing all events that are not in
the above set.

\paragraph{Observer Indistinguishability Relations}
The observer's view of a trace is the order of non-overlapping method
calls. 
We let $\textsc{Inv} \ni \mathit{in} :=
(t, \mathit{inv}(m,v))$ and $\textsc{Ret} \ni r :=  (t,
\mathit{ret}(m,v))$. The indistinguishability
relation of the observer $ \preceq_{\mbox{obs}} \; \subseteq
(\mathcal{E}^\omega \times \mathbb{N})^2$ is given by: 
for all $(E,i),(E',i') \in \mathcal{E}^\omega \times \mathbb{N}$:
 $(E,i) \preceq_{\mbox{obs}} (E',i')\mbox{ :iff }\mbox{obs}(E,i)
\subseteq \mbox{obs}(E' , i')$ where obs: $(\mathcal{E}^\omega \times
\mathbb{N}) \to \mathcal{P}(\mathcal{E}^2)$ designates a projection
onto the observer's local view, such that:
 obs$(E,i) =\{ (r,\mathit{in}) \in$ \textsc{Ret} $\times \textsc{ Inv} \; |$  pos$(r,E) < $ pos$(\mathit{in},E) \leq i \}$.
We abbreviate $\mathit{obs}(E):= \mathit{obs}(E,\mathit{len}(E))$.

\paragraph{Joint Indistinguishability Relations}
Joint indistinguishability relations link pairs of traces that a group
of threads can distinguish if they share their knowledge. Whenever a
thread in the group can tell the difference between two traces, the
group can. Let $G \subseteq \textsc{Threads}$. 
We define the joint indistinguishability relation of group $G$ to
be $\sim_{G} := (\bigcap_{t \in G} \sim_t)$ and $\sim_{G \uplus \{
\mathit{obs}\}} := \sim_{G} \cap \preceq_\mathit{obs}$. For any
indistinguishability relation $\sim$, we write $E \sim E'$ as an
abbreviation for $(E,\mathit{len}(E)) \sim (E,\mathit{len}(E'))$.

\subsection{Syntax}
\label{sec:syntax}
A formula $\psi$ takes the form: 
\begin{equation*}
  \begin{array}[t]{r@{\;::=\;}l}
    \psi & D_G (\varphi)  \; | \;  D_G (\psi) \; | \; \psi \mathrel{\land} \psi \; | \;
\lnot \psi \\[\jot]
    \varphi & p \; | \; \varphi \wedge \varphi \; | \; \neg \varphi
\; | \; \varphi S \varphi \; | \; \varphi U \varphi \; | \; \forall x ( \varphi)
  \end{array}
\end{equation*}

with $G \subseteq \textsc{Threads} \uplus \{ \mathit{obs} \}$ and $p \in \textsc{Predicates}$, which we instantiate for each of our characterizations.
The logic provides the temporal modalities $\varphi S \psi$
representing the fact that since $\psi$ occurred, $\varphi$ holds and
the modality $\varphi U \psi$ representing the fact that until $\psi$ occurrs, $\varphi$
holds. Additionally, it provides the \emph{distributed knowledge} modality $D_G$ and first order quantification.
Let $\Phi$ denote the set of all formulae in the logic.
\subsection{Semantics}
\label{def:semantics}
We now define the satisfaction relation $\models \;  \subseteq
(\mathcal{E}^\omega \times (\mathbb{N} \uplus \omega)) \times \Phi$. We let:
\begin{equation*}
\begin{array}[t]{@{}l@{\models}l@{\;\text{:iff}\;}l@{}}
(E,i)&\varphi \wedge \psi  & (E,i) \models \varphi$ and $(E,i) \models \psi\\
(E,i)&\neg \varphi &  \text{not} \; (E,i) \models \varphi \\
\end{array}
\end{equation*}
We define the temporal modalities by:
\begin{equation*}
\begin{array}[t]{@{}l@{\models}l@{\;\text{:iff}\;}l@{}}
(E,i) & \varphi S \psi & 
\begin{array}[t]{ll}
\text{ there is } j \leq i \text{ s.t. } (E,j) \models \psi \text{ and }\\
\text{ for all } j < k \leq i: (E,k)\models \varphi\\
\end{array}\\
(E,i) &  \varphi U \psi & 
\begin{array}[t]{ll}
\text{ there is } j \leq i \text{ s.t. } (E,j) \models \psi \text{ and} \\
\text{ for all } 1 \leq k < j: (E,k) \models \varphi
\end{array}
\end{array}
\end{equation*}
We define distributed knowledge as: 
$(E,i) \models D_G (\varphi)$ :iff for all $(E',i')$: if $(E,i) \sim_{G}
(E',i')$ then $(E',i') \models \varphi$, with $G \subseteq
\textsc{Threads} \uplus \{ \mathit{obs} \}$.
Let $\textsc{D}$ be the domain of quantification. We define first-order quantification:
$(E,i) \models \forall x(\varphi)$ :iff for all $d \in \textsc{D}: (E,i) \models \varphi [d/x]$.
By $\varphi [d/x]$, we denote the term $\varphi$ with all occurrences
of $x$ replaced by $d$. We define $\textsc{D}$ as the disjoint union
of all quantities used in the definition of a condition.
We write $E \models \varphi$ as an abbreviation for $(E,\mathit{len}(E)) \models \varphi$.
\paragraph{Additional Definitions}
\label{app:macro} For convenience, we define the following standard operators in terms
of our above definitions: $\varphi \vee \psi:= \neg (\neg \varphi \wedge
\neg \psi)$, $\varphi \rightarrow \psi := \neg \varphi \vee \psi$,
$\top := (p \vee \neg p)$ for some atomic predicate $p$,
$\diamondminus \varphi := \top S \varphi$ ("once $\varphi$"),
$\boxminus \varphi := \neg \diamondminus \neg \varphi$ ("so far
$\varphi$"), $\Diamond \varphi := \top U \varphi$ (``eventually
$\varphi$''), $\Box \varphi := \neg \Diamond \neg \varphi$ (``always
$\varphi$''), $\varphi W \psi := \varphi U \psi \vee \Box \varphi$
(``weak until''), $\exists x(\varphi) := \neg \forall x(\neg
\varphi)$. 

\section{Sequential Consistency}
\label{sec:seqCons}
We present a trace-based definition of sequential consistency (cf.,
\cite{Attiya94}) and prove our logical characterization
equivalent. Our definition of sequential consistency generalizes the original
definition \cite{Lamport79} by allowing non-sequential specifications. 
\begin{definition}[Sequential Consistency]
\label{def:seq_cons}
Let $\textsc{Spec} \subseteq \mathcal{E}^\ast$ be a specification of
the shared data-structure.
A trace $E \downharpoonright i$ is sequentially consistent
seqCons$(E,i)$ if and only if there is $(E',i') \in \mathcal{E}^\omega
\times \mathbb{N}$ s.t.
 for all $t \in \textsc{Threads}$:
\begin{center}
 $(E \downharpoonright i) \downarrow
 t$ = $(E' \downharpoonright i') \downarrow t$ and $E'
 \downharpoonright i' \in \textsc{Spec}$
\end{center} 
\end{definition}
\paragraph*{Basic Predicates} For our logical characterization, we define the predicate $\mathit{correct}$ representing the fact that a trace meets the specification:
\begin{center}
 $(E,i) \models \mathit{correct}$ :iff 
  $E \downharpoonright i \in $\textsc{ Spec}\\
\end{center}
\begin{theorem}[Logical Characterization Sequential Consistency]
\label{thm:seq_cons}
A trace~$E \downharpoonright i $ is sequentially consistent if and
only if the threads do not jointly know that it is incorrect: 
$\mathit{seqCons}(E,i)$ iff $(E,i) \models \neg D_\textsc{Threads}(
\neg correct)$.
\end{theorem}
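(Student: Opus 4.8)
The plan is to unfold both sides of the biconditional to their underlying definitions and observe that they coincide almost verbatim, modulo the standard unwinding of the distributed knowledge modality. Concretely, I would start from the right-hand side: by the semantics of $\neg$ and $D_G$, the statement $(E,i) \models \neg D_\textsc{Threads}(\neg \specCorrect)$ holds if and only if it is \emph{not} the case that for all $(E',i')$ with $(E,i) \sim_\textsc{Threads} (E',i')$ we have $(E',i') \models \neg \specCorrect$; pushing the negation inward, this is equivalent to the existence of some $(E',i')$ such that $(E,i) \sim_\textsc{Threads} (E',i')$ and $(E',i') \models \specCorrect$, i.e.\ $(E',i') \not\models \neg\specCorrect$.

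Next I would expand the two conjuncts of this existential statement using the definitions given in the excerpt. First, $(E',i') \models \specCorrect$ unfolds by the definition of the $\specCorrect$ predicate to $E' \downharpoonright i' \in \textsc{Spec}$. Second, $(E,i) \sim_\textsc{Threads} (E',i')$ unfolds via $\sim_\textsc{Threads} = \bigcap_{t \in \textsc{Threads}} \sim_t$ to: for all $t \in \textsc{Threads}$, $(E,i) \sim_t (E',i')$, which by the definition of the thread indistinguishability relation is exactly $(E \downharpoonright i)\downarrow t = (E' \downharpoonright i')\downarrow t$. Putting these together, the right-hand side is equivalent to: there exists $(E',i') \in \mathcal{E}^\omega \times \mathbb{N}$ such that for all $t \in \textsc{Threads}$, $(E \downharpoonright i)\downarrow t = (E' \downharpoonright i')\downarrow t$, and $E' \downharpoonright i' \in \textsc{Spec}$. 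This is verbatim Definition~\ref{def:seq_cons} of $\seqCons(E,i)$, so the two sides are equivalent, which completes the proof.

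I do not expect a serious obstacle here, since the theorem is essentially a definitional repackaging — the epistemic formula was designed so that its unfolding matches the classical definition. The one point requiring a little care is the quantification over positions and the role of the index $i'$: Definition~\ref{def:seq_cons} quantifies $(E',i') \in \mathcal{E}^\omega \times \mathbb{N}$, while the semantics of $D_G$ quantifies over all $(E',i')$, and one should check that the indistinguishability relation's typing ($\sim_t \subseteq (\mathcal{E}^\omega \times (\mathbb{N} \uplus \{\omega\}))^2$) does not introduce spurious witnesses with $i' = \omega$ that are not allowed in the definition — but since $(E \downharpoonright i)\downarrow t$ for finite $i$ forces the matching prefix $(E' \downharpoonright i')\downarrow t$ to be finite, and the correctness predicate demands a finite prefix in $\textsc{Spec} \subseteq \mathcal{E}^\ast$, any relevant witness is effectively finite-indexed, so the quantifier ranges align. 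Thus the only real work is bookkeeping the prefix/projection notation and the direction of the double negation, and the proof is a short chain of ``iff'' steps.
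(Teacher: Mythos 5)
Your proposal is correct and is exactly the paper's own argument: the paper proves Theorem~\ref{thm:seq_cons} simply ``by expanding the definitions in \ref{def:semantics},'' which is the chain of definitional unfoldings you carry out. Your extra remark about ruling out witnesses with $i'=\omega$ (since $\textsc{Spec}\subseteq\mathcal{E}^\ast$ forces any witness prefix to be finite) is a sound piece of bookkeeping that the paper leaves implicit.
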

\begin{proof}
By expanding the definitions in \ref{def:semantics}.
\end{proof}

\section{Eventual Consistency}
\label{sec:evCons}
We define the set of actions for eventual consistency as: 
\begin{equation*}
\mathcal{A} \ni \mathit{act}:= \;  \mathit{qu}(\mathit{id},q,r) \;
\; | \; \; \mathit{up}(\mathit{id},u) \;
\; | \; \; \mathit{com}(\mathit{id}) \;
\; | \; \;  \mathit{fwd}(t,t',\mathit{id}).
\end{equation*}
Threads may pose a query ($\mathit{qu}$) $q \in \textsc{Queries}$ with result $r \in
\textsc{Values}$, issue an update ($\mathit{up}$) $u \in
\textsc{Updates}$, or commit ($\mathit{com}$)
their local changes. Queries, updates and commits get assigned a
revision-id $\mathit{id} \in \textsc{Identifiers}$, representing the
current state of the local database copy. We assume that if a thread
commits, the committed revision id matches the revision id of the
previous queries and updates, and that thread-revision-id pairs~$(t,\mathit{id})$ are unique. Again, this is no restriction. To fulfill the requirement, the threads can just increment their local revision id whenever they commit.
As updates may get lost in the network, we represent by $ \mathit{fwd}(t,t',\mathit{id})$ the successful forwarding of the updates made by thread $t$ in revision $id$ to thread $t'$.

\paragraph*{Preliminaries}
We let let $\mathit{set}(E)= \{ e \in E \}$, i.e., the set of events
in trace E. On a fixed trace $E$, we define the program order
$\prec_{p}$ as $e \prec_{p} e'$ :iff if there is $t$ such that
$\mathit{pos}(e,E \downarrow t) < \mathit{pos}(e',E \downarrow t)$.
We let "$\_$" represent irrelevant, existential quantification.
Let $e \equiv_t e'$ if and only if there is $id \in \textsc{Identifiers}$ such that $e=(t,\_(\mathit{id},\_) )$ and $e'=(t,\_(\mathit{id},\_))$, i.e., if the events belong to the same revision of thread $t$. A relation $\preceq$ factors over $\equiv_t$ if $x \preceq y$, $x \equiv_t x'$ and $y \equiv_t y'$ imply $x' \preceq y'$. Updates are interpreted in terms of states, i.e., we assume there is an interpretation function $u^{\#}: \textsc{States} \to \textsc{States}$, for each $u \in \textsc{Updates}$, and a designated initial state $s_0 \in \textsc{States}$. For each query $q \in \textsc{Queries}$, there is an interpretation function $q^{\#}: \textsc{States} \to \textsc{Values}$.
For a finite set of events $E_S$, a total order~$\prec$ over the events in $E_S$, and a state $s$ we let $\mathit{apply}(E_s,\prec,s)$ be the result of applying all updates in $E_s$ to $s$, in the order specified by $\prec$.
\begin{definition}[Eventual Consistency]
We use the definition presented in \cite{evCons}.
A trace~$E \in \mathcal{E}^\infty$ is eventually consistent ($\mathit{evCons}(E)$) if and only if there exist a partial order $\prec_{v}$ (visibility order), and a total order $\prec_{a}$ (arbitration order) on the events in $\mathit{set}(E)$ such that:
\begin{itemize}
\item $\prec_{v}  \subseteq \prec_{a}$ (arbitration extends visibility). 
\item $\prec_{p} \subseteq \prec_{v}$ (visibility is compatible with program-order).
\item for each $e_q = (t,\mathit{qu}(\mathit{id},q,r)) \in E$, we have r = $\mathit{apply}(\{e \; | \; e \prec_{v}  e_q \},\prec_{a},s_0)$ (consistent query results).
\item $\prec_{a}$ and $\prec_{v}$ factor over $\equiv_t $ ( atomic revisions).
\item if $(t, \mathit{com}(\mathit{id})) \not\in E$ and $(t,\_(\mathit{id},\_)) \prec_{v} (t',\_)$ then $t=t'$ (uncommitted updates).
\item if $e= (t,\mathit{com}(\mathit{id})) \in \mathit{E}$ then there are only finitely many $e' := (t',\mathit{com}(\mathit{id}'))$ such that $e' \in E$ and $e \not\prec_{v} e'$ (eventual visibility).
\end{itemize} 
\end{definition}

\subsection{Logical Characterization}
\paragraph*{Basic Predicates}
 We represent queries and updates by predicates
 $\mathit{query}(t,q,r,\mathit{id})$ and
 $\mathit{update}(t,u,\mathit{id})$, representing $t \in
 \textsc{Threads}$, issuing $q \in \textsc{Queries}$ with result $r
 \in \textsc{Values}$ on revision $id \in \textsc{Identifiers}$, and
 $t$ performing $u \in \textsc{Updates}$ on revision $\mathit{id}$,
 respectively.  As threads work on their local copies, revision ids
 mark the version of data the threads work with. We represent commits
 by the predicate $\mathit{commit}(t,\mathit{id})$, representing $t$
 committing its state in revision $id$. After performing a commit, a
 new revision id is assigned. We define:
\begin{equation*}
  \begin{array}[t]{@{}l@{\models}l@{\;\text{:iff}\;}l@{}}
(E,i) & \mathit{query}(t,q,r,\mathit{id}) & E@i=(t,\mathit{qu}(\mathit{id},q,r))\\[\jot]
(E,i) &   \mathit{update}(t,u,\mathit{id})& E@i=(t,\mathit{up}(\mathit{id},u))\\[\jot]
(E,i) &   \mathit{commit}(t,\mathit{id}) & E@i=(t, \mathit{com}(\mathit{id}))
\end{array}
\end{equation*}

We let $\mathit{query}(t,q,r) \; := \; \exists
\mathit{id}(\mathit{query}(t,q,r,\mathit{id}))$. Upon commit, a thread forwards all the information from its local repository to the database system and receives updates from other threads. Committed updates may however be delayed or lost by the network. By predicate forward$(t,t',id)$ we mark the event that the environment forwarded the updates $t$ performed in revision $id$ to $t'$. We let: 
$(E,i) \models  \mathit{forward}(t,t',\mathit{id})$ :iff $E@i=(\mathit{env},\mathit{fwd}(t,t',\mathit{id}))$.
Eventual Consistency, requires all threads to keep valid logs. Logs
are finite sequences of actions, i.e., $\mathcal{L} \in A^\ast$. We represent log validity by the formula:
$\mathcal{L} \; \mathit{validLog}  \; t \; := \; \forall a(t \; k_{\mathit{log}} \;  a \leftrightarrow a \; \mathit{in} \; \mathcal{L}) \; \wedge \mathit{consistent}(\mathcal{L})$.
That is, to be a valid log for thread~$t$, log~$\mathcal{L}$ must contain exactly the actions that $t$ knows of and these actions must be arranged in an order consistent with respect to the other threads logs. We let:
$(E,i) \models \; a \; \mathit{in} \; \mathcal{L} \; :$iff $a \in L$.
The predicate $t \; k_{\mathit{log}} \; a$ represents the fact that
$t$ knows about action $a$. The predicate~$k_{\mathit{log}}$ represents individual
knowledge, i.e., knowledge in the sense of knowing about an action in contrast to knowing that a fact is true \cite{kramer10}. We let:
\begin{center}
\begin{tabular}{lll}
$(E,i) \models$&$t \; k_{\mathit{log}} \; a$ :iff  there is $j \leq i: (E@j= (t,a)$ or \\[\jot]
&($(E,j) \models \mathit{forward}(t',t,id)$ and there is $l<j: \; (E,l) \models \mathit{commit}(t',id)$\\[\jot]
&and $(E,l) \models t' \; k_{log} \; a$))  
\end{tabular}
\end{center}
That is, threads know an action if they performed it themselves, or
they received an update containing it. Upon commits, threads pass on all
actions they know about.
A log $\mathcal{L}$ is consistent if the actions in the log occur in
the same order as the actions in the real trace. This means the sequence of actions in $\mathcal{L}$ must be a subsequence of the actions in the real trace. A sequence $a= a_1 a_2 \ldots a_n$ is a subsequence of a sequence $b=b_1 b_2 \ldots b_m$ ($a \preceq b$), if and only if there exist $1 \leq i_1 < i_2 <  \ldots < i_n \leq m$ such that for all $1 \leq j \leq n: a_j = b_{i_j}$. We project a sequence of events to a sequence of actions by the function $\mathit{act}: \mathcal{E}^\ast \to \mathcal{A}^\ast$, such that
$\mathit{act}((t_1,a_1) (t_2,a_2) \ldots (t_n,a_n)) = a_1a_2 \ldots  a_n$.
We define:
$(E,i) \models \mathit{consistent}(L):$iff $ \mathcal{L} \preceq \mathit{act}(E \downharpoonright i)$.\\
\paragraph{Query Results}
All queries that threads issue must return the correct result with
respect to the logged operations. That is, the query's result must
match the result the query would yield when issued on a database that
performed all the updates in the log. We represent the fact that
query~$q$ would yield result $r$ on log $\mathcal{L}$ by the predicate result$(q,\mathcal{L},r)$.
We define the order of actions in a log $\mathcal{L}$ by the relation $<_\mathcal{L}$. We let $a <_\mathcal{L} a'$ :iff $pos(a,\mathcal{L}) < pos(a',\mathcal{L}) < \omega$.
We define: 
$(E,i) \models \mathit{result}(q,L,r):$iff $r=q^\#(\mathit{apply}(set(\mathcal{L}),<_\mathcal{L},s_0))$.
\paragraph*{Network Assumptions}
We pose additional requirements on the network:
Updates in the same revision must be sent as atomic bundles
  ($\mathit{atomicTrans}$).
Only commited updates can be forwarded ($\mathit{fwd}$).
Active threads must eventually receive all committed update
  ($\mathit{alive}$).
We formalize them in Appendix \ref{app:EVC}.
We represent $\mathit{correctEVC}$ by the formula: 
\begin{equation*}
  \mathit{correctEVC} := 
  \begin{array}[t]{@{}l@{}}
    \forall t \forall q \forall r \\[\jot]
  \begin{array}[t]{@{}l@{}}
	 ( \boxminus( \mathit{query}(t,q,r) \rightarrow
	 \exists \mathcal{L}( \mathcal{L} \; \mathit{validLog} \; t \;  \wedge  \mathit{result}(q,\mathcal{L},r)))) \\ [\jot]
         \wedge \mathit{atomicTrans} \wedge \mathit{alive} \wedge \mathit{fwd} 
  \end{array}
\end{array}
\end{equation*}

\begin{theorem}[Logical Characterization Eventual
  Consistency]
\label{thm:evCons}
A trace is eventually constent if and only if the threads do not know
that it violates $\specEVC$. For all traces $E \in \mathcal{E}^\infty$:

\begin{center}
$evCons(E)$ if and only if $E \models \neg D_{\textsc{Threads}} \neg(\mathit{correctEVC})$.
\end{center}
\end{theorem}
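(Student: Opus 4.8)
The plan is to follow the pattern of Theorem~\ref{thm:seq_cons}: unfold the knowledge modality so that the claim becomes a statement about the existence of a suitably rearranged witness trace, and then prove that the witness traces of $\specEVC$ are exactly the traces admitting the visibility/arbitration certificate of $\mathit{evCons}$. Concretely, by the semantics of $D_G$ and of negation, $E\models\neg D_{\textsc{Threads}}\neg(\specEVC)$ holds iff there is $(E',i')$ with $(E,\mathit{len}(E))\sim_{\textsc{Threads}}(E',i')$ and $(E',i')\models\specEVC$; since $\sim_{\textsc{Threads}}=\bigcap_{t\in\textsc{Threads}}\sim_t$ and we may truncate at $i'$, this is the existence of a trace $E'$ with $E'\downarrow t=E\downarrow t$ for all threads $t$ and $E'\models\specEVC$ (unfolding $\boxminus$ also shows the inner quantification ranges over every query event of $E'$). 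The remaining task is: $\mathit{evCons}(E)$ iff such a witness $E'$ exists. Note that $E$ and $E'$ share their per-thread projections but may differ in environment events, which is exactly the slack the environment forwarding events are meant to exploit.

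For ``witness $\Rightarrow$ $\mathit{evCons}(E)$'', I would read the two orders off $E'$: take $\prec_a$ to be the order in which events occur in $E'$ (restricted to $\mathit{set}(E)$, placing any environment events of $E$ arbitrarily), and take $\prec_v$ to be the partial order generated by program order together with all pairs (event of revision $(t',\mathit{id})$, later event $e$) such that the thread of $e$ knows, via $k_{\mathit{log}}$, the actions of revision $(t',\mathit{id})$ at the position of $e$. Then the six clauses of $\mathit{evCons}$ fall out: $\prec_p\subseteq\prec_v$ is by construction; $\prec_v\subseteq\prec_a$ since anything a thread knows has already appeared in the trace; for a query $e_q=(t,\mathit{qu}(\mathit{id},q,r))$ the clause $\mathcal{L}\;\mathit{validLog}\;t$ forces the log $\mathcal{L}$ supplied by $\specEVC$ to enumerate exactly the actions $t$ knows of, hence exactly the $\prec_v$-visible actions (modulo query and commit events, which do not affect $\mathit{apply}$), and $\mathit{consistent}(\mathcal{L})$ forces $<_{\mathcal{L}}$ to agree with the trace order and hence with $\prec_a$, so $\mathit{result}(q,\mathcal{L},r)$ is precisely the consistent-query-results equation; factoring over $\equiv_t$ comes from $\mathit{atomicTrans}$; the uncommitted-updates clause from $\mathit{fwd}$; and eventual visibility from $\mathit{alive}$.

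For ``$\mathit{evCons}(E)\Rightarrow$ witness'', I would build $E'$ from $\prec_v,\prec_a$: interleave the thread events of $E$ in an order refining $\prec_a$ (program order is respected since $\prec_p\subseteq\prec_v\subseteq\prec_a$), and, for every thread event $e'$ of a thread $t$ and every foreign revision $(t',\mathit{id})$ all of whose events are $\prec_v$-below $e'$, splice an environment event $(\mathit{env},\mathit{fwd}(t',t,\mathit{id}))$ in after $(t',\mathit{com}(\mathit{id}))$ and before $e'$. Since only environment events are added, $E'\downarrow t=E\downarrow t$ for every $t$. For each query $e_q$ the set of actions $t$ knows of at $e_q$ in $E'$ is then exactly the $\prec_v$-visible set of updates: completeness uses the uncommitted-updates clause to ensure the commit needed to forward a visible foreign revision actually exists in $E$, and soundness uses $\prec_p\subseteq\prec_v$, transitivity of $\prec_v$ and factoring over $\equiv_t$ to see that a commit--forward chain leaks nothing beyond $\prec_v$-visible actions. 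Taking $\mathcal{L}$ to be this set in trace order gives a valid log with $\mathit{result}(q,\mathcal{L},r)$ equal to the consistent-query-results equation, and the network conjuncts $\mathit{atomicTrans}$, $\mathit{fwd}$ hold by construction, while $\mathit{alive}$ follows from eventual visibility (which bounds the commits to which a revision fails to be forwarded).

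The main obstacle is the construction direction, and two points within it. First, producing a single $\omega$-trace realizing the abstract order $\prec_a$ while scheduling commits and forwards early enough that every query sees all of its visible updates: this wants an inductive, prefix-by-prefix construction and possibly a refinement of $\prec_a$ to order type~$\omega$. Second, the bookkeeping that the transitive closure of $k_{\mathit{log}}$-knowledge along commit--forward chains coincides with $\prec_v$-visibility; the clean way to discharge this is to carry an invariant that after each prefix the actions a thread knows of are exactly the actions $\prec_v$-below (or in the current revision of) that thread's most recent event. The ``$\Leftarrow$'' direction is comparatively routine once $\prec_v$ has been correctly abstracted from the formula.
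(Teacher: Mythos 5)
Your proposal matches the paper's own proof in essentially every respect: both directions unfold $\neg D_{\textsc{Threads}}\neg(\cdot)$ into the existence of a thread-equivalent witness, the forward direction builds $E'$ by aligning $\mathit{set}(E)$ along $\prec_a$ and splicing in $\mathit{fwd}$ events just before the first event of the $\preceq_a$-minimal revision that $\prec_v$-sees the forwarded one, and the converse direction recovers $\prec_a$ from position order in $E'$ and $\prec_v$ from log membership (equivalently, $k_{\mathit{log}}$-knowledge, by the $\mathit{validLog}$ biconditional), with the same clause-by-clause pairing of $\mathit{atomicTrans}$, $\mathit{fwd}$, $\mathit{alive}$ and the query-result condition against the six axioms. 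The only notable difference is that you explicitly flag the need to refine $\prec_a$ to order type $\omega$ for infinite traces, a subtlety the paper's $\mathit{align}$ construction passes over silently.
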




\section{Linearizability}
\label{sec:lin}
Linearizability refines sequential consistency by guaranteeing that each method call takes its effect at exactly one point between its invocation and its return.

For our definition of linearizability, we follow \cite{gotsman11}. As for sequential consistency, our definition generalizes the original notion \cite{Herlihy87,Herlihy90} by allowing non-sequential specifications. We define the real-time precedence order $\preceq_{real}$ $\subseteq (\mathcal{E}^\omega \times \mathbb{N})^2$:
$(E,i)$ $\preceq_{real}$ $(E',i')$ :iff there is a bijection $\pi: \{1,\ldots,i\} \rightarrow  \{1,\ldots,i'\} $  s.t
for all $j \in \mathbb{N}$ such that $j \leq i:   E@j = E'@\pi(j)$, i.e., $E'$ is a permutation of $E$, and
for all $j,k \in \mathbb{N}$  such that $j < k \leq i:$ if $E@j \in \textsc{Ret} \mbox{ and } E@k \in \textsc{Call} \mbox{ then } \pi(j) < \pi(k)$, i.e., when permuting the events in $E$, calls are never pulled before returns.
\begin{definition}[Linearizability]
\label{def:lin}
A trace $(E,i)$ is linearizable ($\mathit{lin}(E,i)$) if and only if there is $(E',i')  \in \mathcal{E}^\omega \times \mathbb{N}$ such that 
(1) for all $t \in \textsc{Threads}$: $(E \downharpoonright i) \downarrow t$ = $(E' \downharpoonright i') \downarrow t$
(2)  $(E,i) \preceq_{real} (E',i')$ and
(3) $E' \downharpoonright i' \in \textsc{Spec}$.
\end{definition}
To characterize linearizability, we make the assumption that each event occurs only once in a trace. This is not a restriction as we could add a unique time-stamp or a sequence number to each event.
\begin{theorem}[Logical Characterization Linearizability]
\label{thm:lin}
A trace $E \downharpoonright i  \in \mathcal{E}^\ast$ is linearizable if and only if the threads together with the observer do not know that it is incorrect: 
\begin{center}
$\mathit{lin}(E,i)$ iff $(E,i) \models \neg D_{\textsc{Threads} \uplus \{ \mathit{obs} \} } \neg correct$.
\end{center}
\end{theorem}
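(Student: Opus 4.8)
The plan is to unfold both sides of the stated equivalence and show they coincide, mirroring the argument that proves Theorem~\ref{thm:seq_cons} but keeping track of the extra constraint contributed by the observer. Expanding the right-hand side, $(E,i) \models \neg D_{\textsc{Threads} \uplus \{\mathit{obs}\}} \neg \mathit{correct}$ holds iff there exists $(E',i')$ with $(E,i) \sim_{\textsc{Threads} \uplus \{\mathit{obs}\}} (E',i')$ and $(E',i') \models \mathit{correct}$, i.e.\ $E' \downharpoonright i' \in \textsc{Spec}$. By the definition of the joint relation, $(E,i) \sim_{\textsc{Threads} \uplus \{\mathit{obs}\}} (E',i')$ means $(E,i) \sim_{\textsc{Threads}} (E',i')$ \emph{and} $(E,i) \preceq_{\mathit{obs}} (E',i')$. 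The first conjunct is exactly condition~(1) of Definition~\ref{def:lin} (for all $t$, $(E\downharpoonright i)\downarrow t = (E'\downharpoonright i')\downarrow t$). So the whole proof reduces to one lemma: under the standing assumption that each event occurs at most once in a trace, $(E,i) \preceq_{\mathit{obs}} (E',i')$ holds if and only if $(E,i) \preceq_{\mathit{real}} (E',i')$ — given that (1) already holds, which in particular forces $E'\downharpoonright i'$ to be a permutation of $E\downharpoonright i$.

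First I would establish the $(\Leftarrow)$ direction of that lemma: assume $(E,i)\preceq_{\mathit{real}}(E',i')$ via the bijection $\pi$. Take any pair $(r,\mathit{in}) \in \mathrm{obs}(E,i)$, so $r \in \textsc{Ret}$, $\mathit{in}\in\textsc{Inv}$ and $\mathrm{pos}(r,E) < \mathrm{pos}(\mathit{in},E) \le i$. Writing $j = \mathrm{pos}(r,E)$, $k=\mathrm{pos}(\mathit{in},E)$, the real-time condition gives $\pi(j) < \pi(k)$; since the events are preserved ($E@j=E'@\pi(j)$, $E@k=E'@\pi(k)$) and occur uniquely, $\mathrm{pos}(r,E') = \pi(j) < \pi(k) = \mathrm{pos}(\mathit{in},E') \le i'$, so $(r,\mathit{in})\in\mathrm{obs}(E',i')$. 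Hence $\mathrm{obs}(E,i)\subseteq\mathrm{obs}(E',i')$, i.e.\ $(E,i)\preceq_{\mathit{obs}}(E',i')$.

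For the $(\Rightarrow)$ direction I would, given $(E,i)\preceq_{\mathit{obs}}(E',i')$ together with the permutation $\pi$ supplied by condition~(1) (note that (1) over all threads determines $E'\downharpoonright i'$ as a permutation of $E\downharpoonright i$ that preserves each thread's local order, hence a bijection $\pi$ on positions with $E@j=E'@\pi(j)$ exists and is unique by the uniqueness-of-events assumption), verify the real-time clause directly: suppose $j<k\le i$ with $E@j\in\textsc{Ret}$ and $E@k\in\textsc{Inv}$. Then $(E@j,E@k)\in\mathrm{obs}(E,i)$, so by hypothesis $(E@j,E@k)\in\mathrm{obs}(E',i')$, i.e.\ $\mathrm{pos}(E@j,E')<\mathrm{pos}(E@k,E')$, which by uniqueness is precisely $\pi(j)<\pi(k)$. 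Combined with the fact that $\pi$ is a bijection preserving the events, this is exactly $(E,i)\preceq_{\mathit{real}}(E',i')$. Assembling the pieces: $(E,i)\models\neg D_{\textsc{Threads}\uplus\{\mathit{obs}\}}\neg\mathit{correct}$ iff there is $(E',i')$ satisfying (1), $(E,i)\preceq_{\mathit{real}}(E',i')$, and $E'\downharpoonright i'\in\textsc{Spec}$, which is exactly $\mathit{lin}(E,i)$ of Definition~\ref{def:lin}.

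\textbf{Main obstacle.} The routine direction is $(\Leftarrow)$; the delicate point is $(\Rightarrow)$, specifically justifying that condition~(1) — equality of \emph{all} thread projections, plus the uniqueness-of-events assumption — genuinely yields a well-defined position bijection $\pi$ on $\{1,\dots,i\}$ that both preserves events and is the \emph{only} candidate, so that the containment $\mathrm{obs}(E,i)\subseteq\mathrm{obs}(E',i')$ can be read back as $\pi(j)<\pi(k)$ rather than merely as the existence of \emph{some} order-respecting matching. One has to be careful that $\mathrm{obs}$ records pairs of \emph{events} (not positions), so transporting a constraint from $E$ to $E'$ only pins down an order between the corresponding events, and it is the uniqueness assumption that promotes ``the event $E@j$ sits before the event $E@k$ in $E'$'' to ``$\pi(j)<\pi(k)$''. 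I would state this position-bijection fact as a small auxiliary observation before the main argument and then the two implications go through as above; the rest is bookkeeping already carried out for Theorem~\ref{thm:seq_cons}.
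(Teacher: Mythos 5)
Your proposal is correct and follows essentially the same route as the paper's proof: both directions hinge on extracting a unique event-preserving position bijection $\pi$ from the equality of all thread projections together with the uniqueness-of-events assumption (the paper's Lemma~\ref{lemma:func}, built on Proposition~\ref{prop:disjoint} and Lemmas~\ref{lemma:card} and~\ref{lemma:exist}), and then translating between $\mathrm{obs}$-containment and the real-time clause via $\pi$ exactly as you describe. The obstacle you flag --- promoting ``the event $E@j$ precedes $E@k$ in $E'$'' to ``$\pi(j)<\pi(k)$'' using uniqueness --- is precisely the step the paper carries out in lines 12--17 of its backward-direction derivation.
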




\section{Knowledge about consistency}
\label{sec:detect}
We write $\models \varphi$ as an abbreviation for:
for all $E \in \mathcal{E}^\infty$: $E\models \varphi$. 
Let $\seqCons := \neg D_{\textsc{Threads}} (\neg \mathit{correct})$. 
\begin{theorem}[Detection Sequential Consistency]
\label{thm:dec_seqs}
Threads can decide whether a trace is sequentially consistent or not:
$    \models (\seqCons \leftrightarrow D_\mathit{Threads}(\seqCons)) \mathrel{\land}
    (\neg\seqCons \leftrightarrow D_\mathit{Threads}(\neg\seqCons))$.
\end{theorem}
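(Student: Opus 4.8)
The plan is to reduce everything to the fact that $\sim_{\textsc{Threads}}$ is an equivalence relation, which is immediate from its definition as $\bigcap_{t\in\textsc{Threads}}\sim_t$ where each $\sim_t$ is defined by equality of projections and hence reflexive, symmetric and transitive. Because the underlying accessibility relation is an equivalence relation, the modality $D_{\textsc{Threads}}$ satisfies the full \emph{S5} axioms; in particular it satisfies the truth axiom (T), $\models D_G(\varphi)\to\varphi$, and the negative introspection axiom (5), $\models \neg D_G(\varphi)\to D_G(\neg D_G(\varphi))$. The theorem is then a routine consequence of (T) and (5) applied to a suitable instance, but since the statement refers to $\seqCons$ which is itself a knowledge formula ($\neg D_{\textsc{Threads}}(\neg\mathit{correct})$), I would prove the two biconditionals directly from the semantics rather than citing the axioms as black boxes.

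First I would fix a trace $(E,i)$ and unfold $\seqCons$ using Theorem~\ref{thm:seq_cons}, so $(E,i)\models\seqCons$ iff there exists $(E',i')$ with $(E,i)\sim_{\textsc{Threads}}(E',i')$ and $(E',i')\models\mathit{correct}$. For the first biconditional, the direction $D_{\textsc{Threads}}(\seqCons)\to\seqCons$ is just axiom (T) (reflexivity of $\sim_{\textsc{Threads}}$: take the witness $(E',i')=(E,i)$ against $D_{\textsc{Threads}}(\seqCons)$). For the converse $\seqCons\to D_{\textsc{Threads}}(\seqCons)$, assume $(E,i)\models\seqCons$ and let $(E'',i'')$ be any trace with $(E,i)\sim_{\textsc{Threads}}(E'',i'')$; I must show $(E'',i'')\models\seqCons$. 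But the witness $(E',i')$ establishing sequential consistency of $(E,i)$ also works for $(E'',i'')$: by symmetry and transitivity of $\sim_{\textsc{Threads}}$ we get $(E'',i'')\sim_{\textsc{Threads}}(E,i)\sim_{\textsc{Threads}}(E',i')$, hence $(E'',i'')\sim_{\textsc{Threads}}(E',i')$, and $(E',i')\models\mathit{correct}$ still holds, so $(E'',i'')\models\seqCons$.

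For the second biconditional, the direction $D_{\textsc{Threads}}(\neg\seqCons)\to\neg\seqCons$ is again (T) via reflexivity. The converse $\neg\seqCons\to D_{\textsc{Threads}}(\neg\seqCons)$ is negative introspection: assume $(E,i)\models\neg\seqCons$, i.e.\ $(E,i)\not\models\neg D_{\textsc{Threads}}(\neg\mathit{correct})$, equivalently $(E,i)\models D_{\textsc{Threads}}(\neg\mathit{correct})$, meaning every $\sim_{\textsc{Threads}}$-related trace satisfies $\neg\mathit{correct}$. Given any $(E'',i'')$ with $(E,i)\sim_{\textsc{Threads}}(E'',i'')$, I must show $(E'',i'')\models\neg\seqCons$, i.e.\ every trace $\sim_{\textsc{Threads}}$-related to $(E'',i'')$ satisfies $\neg\mathit{correct}$; this follows because any such trace is, by transitivity through $(E,i)$, also $\sim_{\textsc{Threads}}$-related to $(E,i)$, hence satisfies $\neg\mathit{correct}$.

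The only subtlety — and the point I expect to need the most care in writing up — is bookkeeping with the index component: indistinguishability relates pairs $(E,i)$, and the witness traces in Definition~\ref{def:seq_cons} and Theorem~\ref{thm:seq_cons} carry their own indices $i'$; I must make sure that the transitivity steps are applied to the correct pairs and that $\mathit{correct}$, which depends only on the prefix $E'\downharpoonright i'$, is genuinely preserved when I reuse a witness against a different indistinguishable trace. Once that is tracked carefully, the proof is a direct application of reflexivity, symmetry and transitivity of $\sim_{\textsc{Threads}}$, with no further combinatorics.
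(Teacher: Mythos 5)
Your proof is correct and follows essentially the same route as the paper, which simply instantiates the S5 axioms (T), (5) and (4) for $D_{\textsc{Threads}}$ (valid because $\sim_{\textsc{Threads}}$ is an equivalence relation); you merely verify the relevant instances directly from reflexivity, symmetry and transitivity instead of citing the axioms as black boxes. One small slip in labelling: the converse of the second biconditional, $\neg\seqCons \to D_{\textsc{Threads}}(\neg\seqCons)$, unfolds to $D_{\textsc{Threads}}(\neg\mathit{correct}) \to D_{\textsc{Threads}}(D_{\textsc{Threads}}(\neg\mathit{correct}))$, which is positive introspection (4) rather than negative introspection --- though the transitivity argument you give for it is exactly the right one.
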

Let $\mathit{Lin} := \neg D_{\textsc{Threads} \uplus \{\mathit{obs} \}} \neg \mathit{correct}$. 
\begin{theorem}[Detection Linearizability]
\label{thm:dec_lin}
There is $E \in \mathcal{E}^\infty$ such that 
$(E,i) \models \mathit{Lin} \land \neg D_\mathit{Threads \uplus \{
  \mathit{obs}\}} (\Lin)$.
As in sequential consistency, the threads together with the observer can spot if a trace is not linearizable:
$\models  \neg \mathit{Lin} \leftrightarrow D_\mathit{Threads  \uplus \{ \mathit{obs}\}}(\neg\mathit{Lin})$.
\end{theorem}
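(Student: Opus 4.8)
The plan is to split the theorem into its two assertions and treat them separately, since they rely on opposite-facing properties of the indistinguishability relation $\sim_{\textsc{Threads} \uplus \{\mathit{obs}\}} = \sim_{\textsc{Threads}} \cap \preceq_{\mathit{obs}}$. For the second assertion, $\models \neg \mathit{Lin} \leftrightarrow D_{\textsc{Threads} \uplus \{\mathit{obs}\}}(\neg \mathit{Lin})$, I would argue exactly as for sequential consistency: unfold $\mathit{Lin} = \neg D_{\textsc{Threads}\uplus\{\mathit{obs}\}}(\neg \specCorrect)$, so $\neg \mathit{Lin} = D_{\textsc{Threads}\uplus\{\mathit{obs}\}}(\neg \specCorrect)$, and the claim becomes $D_G(\neg\specCorrect) \leftrightarrow D_G(D_G(\neg\specCorrect))$ with $G = \textsc{Threads}\uplus\{\mathit{obs}\}$. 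The forward direction is positive introspection (axiom (4)), which holds because $\sim_G$ is transitive (it is an intersection of a transitive relation $\sim_{\textsc{Threads}}$ with the transitive relation $\preceq_{\mathit{obs}}$, the latter being transitive since $\mathit{obs}(E,i)\subseteq\mathit{obs}(E',i')\subseteq\mathit{obs}(E'',i'')$ gives $\mathit{obs}(E,i)\subseteq\mathit{obs}(E'',i'')$). The backward direction is the truth axiom (T), which holds because $\sim_G$ is reflexive. So this half is a routine unfolding of \Cref{def:semantics} together with the \emph{S4} properties already noted in Section~\ref{sec:detect}.

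For the first assertion I must exhibit a concrete trace $E$ (with index $i$) satisfying $\mathit{Lin} \land \neg D_{\textsc{Threads}\uplus\{\mathit{obs}\}}(\mathit{Lin})$. The idea is to reuse the register example: take $E := E_6$, the trace $(t_2,\mathit{inv}\,\mathit{ld}())\,(t_2,\mathit{ret}\,\mathit{ld}(1))\,(t_1,\mathit{inv}\,\mathit{st}(1))\,(t_1,\mathit{ret}\,\mathit{st}(\mathit{true}))$ — wait, that one is \emph{not} linearizable, so instead I need a trace that \emph{is} linearizable but is $\sim_G$-related to a non-linearizable one. The right choice is $E := E_7$, the fully-overlapping trace, with the register specification $\specREG$ instantiating $\specCorrect$. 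First I show $E_7 \models \mathit{Lin}$: since $\mathit{obs}(E_7) = \varnothing \subseteq \mathit{obs}(E_8)$ and the thread-projections of $E_7$ and $E_8$ coincide, we have $E_7 \sim_{\textsc{Threads}\uplus\{\mathit{obs}\}} E_8$, and $E_8 \models \specREG$, so by \Cref{thm:lin} $E_7$ is linearizable. Then I show $E_7 \not\models D_G(\mathit{Lin})$ by exhibiting an indistinguishable witness that is \emph{not} linearizable: take $E_6$. We have $\mathit{obs}(E_7) = \varnothing \subseteq \mathit{obs}(E_6)$ and $E_6 \downarrow t = E_7 \downarrow t$ for each $t$, hence $E_7 \sim_{\textsc{Threads}\uplus\{\mathit{obs}\}} E_6$; but $E_6$ is not linearizable (no permutation respecting returns-before-calls and the thread order meets $\specREG$, since $t_2$'s load returns $1$ before $t_1$'s store is even invoked), i.e. $E_6 \not\models \mathit{Lin}$. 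Therefore there is a $G$-indistinguishable trace violating $\mathit{Lin}$, so $E_7 \not\models D_G(\mathit{Lin})$, and combining, $E_7 \models \mathit{Lin} \land \neg D_G(\mathit{Lin})$, which is the required $E$ (taking $i = \mathit{len}(E_7)$).

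The conceptual heart — and the main obstacle — is getting the direction of $\preceq_{\mathit{obs}}$ right: because $\sim_G$ is only a \emph{partial order} rather than an equivalence, $E_7 \sim_G E_8$ does not entail $E_8 \sim_G E_7$, and likewise $E_7 \sim_G E_6$ does not entail $E_6 \sim_G E_7$. One must check that the trace asserted to be linearizable (here $E_7$, with empty observer view) sits \emph{below} both a linearizable witness and a non-linearizable one in the $\subseteq$-ordering on observer views; the empty view is the bottom element, which makes $E_7$ the natural choice. The contrast with Theorem~\ref{thm:dec_seqs} is precisely that there $\sim_{\textsc{Threads}}$ is symmetric, so no such trace can exist — any indistinguishable non-sequentially-consistent witness would, by symmetry and \Cref{thm:seq_cons}, force the original trace to be non-sequentially-consistent too. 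I would close the proof by remarking that this asymmetry is exactly the failure of negative introspection (axiom (5)) flagged in Section~\ref{sec:detect}, so Theorem~\ref{thm:dec_lin} is the concrete manifestation of the \emph{S5}-vs-\emph{S4} distinction.
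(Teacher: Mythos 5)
Your proposal is correct, and it splits cleanly into a half that matches the paper and a half that takes a genuinely different route. For the biconditional $\models \neg\Lin \leftrightarrow \DTObs(\neg\Lin)$ you do exactly what the paper does: unfold $\neg\Lin$ to $\DTObs(\neg\specCorrect)$ and appeal to axioms (4) and (T), which hold because $\sim_{\textsc{Threads}\uplus\{\mathit{obs}\}}$ is reflexive and transitive; the paper's proof simply says this is ``analogue to the case of sequential consistency.'' For the existential claim, however, the paper gives only an abstract frame-theoretic argument: since $\preceq_{\mathit{obs}}$ is a partial order rather than an equivalence, negative introspection (5) is not a validity, ``hence its negation is satisfiable.'' You instead construct an explicit witness: $E_7$ (empty observer view) with the register specification, sitting $\preceq_{\mathit{obs}}$-below both the correct witness $E_8$ and the non-linearizable $E_6$. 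Your route buys something real: the paper's appeal to the non-validity of schema (5) over non-Euclidean frames does not by itself show that the \emph{particular} instance $\varphi := \neg\specCorrect$ is refutable, since $\specCorrect$ is an interpreted predicate tied to a fixed $\textsc{Spec}$ rather than a freely valuated propositional atom; one still needs a specification and traces realizing the failure, which is precisely what your $E_6,E_7,E_8$ construction supplies (and the paper only gestures at via the examples in Section~\ref{sec:illu}). Your closing observation about the direction of $\subseteq$ on observer views --- that the asserted-linearizable trace must lie \emph{below} both witnesses, and that symmetry of $\sim_{\textsc{Threads}}$ is exactly what rules this out for sequential consistency --- is the right conceptual point and is consistent with the paper's S5-versus-S4 discussion.
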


\section{Related Work}
\label{sec:rel}

The only applications of epistemic logic to concurrent computations that we are aware of are a logical characterization of wait-free
computations by Hirai \cite{Hirai10} and a knowledge based analysis of cache-coherence by Baukus et al. \cite{Baukus04}. 
 \bibliographystyle{abbrv}
 \bibliography{Literatur}

\appendix

\section{Additional Definitions Eventual Consistency}

 
\label{app:EVC}
We define the helper predicate:
\begin{center}
$\mathit{rev}(t,id)$ := $\exists q \exists r (\mathit{query}(t,q,r,id)) \vee \exists u(\mathit{update}(t,u,\mathit{id}))$
\end{center}
representing the fact, that the current action belongs to revision $id$ of thread $t$.  We specify the requirements that updates made in the same revision must be sent bundled as indivisible transactions by the formula :
 \begin{equation*}
 \mathit{atomicTrans} :=
  \forall t \forall id
    (\boxminus (\mathit{rev}(t,\mathit{id})\rightarrow
	 \WeakUntil{\mathit{rev}(t,\mathit{id})} {\mathit{commit}(t,\mathit{id})}  ))
 \end{equation*}
That is, queries and updates from revision $id$ are only followed by
other queries and updates from the same revision, or a commit. We enforce that only committed revisions can be forwarded by:
\begin{equation*}
 \mathit{fwd} :=
  \forall t \forall t' \forall id
    (\boxminus (\mathit{fwd}(t,t',\mathit{id})\rightarrow
	 \Box (\neg \mathit{commit}(t,\mathit{id}) ) ))
 \end{equation*}
Threads that makes progress, i.e. that commit infinitely often must eventually receive all committed updates. We formalize this as:
\begin{equation*}
\mathit{alive} :=
\begin{array}[t]{ll}
  \forall t \forall t' \forall \mathit{id}  \\
( \boxminus (\mathit{commit}(t,\mathit{id}) \wedge \Box \Diamond (\exists \mathit{id}'( \mathit{commit}(t',id')) ) \rightarrow \\[\jot] 
\Diamond \mathit{forward}(t,t',\mathit{id}))) \\[\jot] 
  \end{array}
\end{equation*}
That is, if thread~$t'$ commits infinitely often, eventually, the database system must manage to forward all the committed updates.
\section{Proofs}
\subsection{Linearizability (Theorem \ref{thm:lin})}
\label{app:lin_proof}
A trace $E \downharpoonright i  \in \mathcal{E}^\ast$ is linearizable if and only if the threads together with the observer do not know that it is incorrect: 
\begin{center}
$\mathit{lin}(E,i)$ iff $(E,i) \models \neg D_{\textsc{Threads} \uplus \{ \mathit{obs} \} } \neg correct$
\end{center}

We formalize our assumption that each event occurs at most once in a trace:
\begin{center}
\begin{tabular}{l l}
unique := & for all E $\in \mathcal{E}^\omega$ and all $j,k \in \mathbb{N}$:\\  
&  if $E@j=E@k$ then $j=k$.\\
\end{tabular}
\end{center}
\label{sec:proofLin}
\begin{definition}[Eventset]
\label{def:setproj}
Let $\llbracket \cdot \rrbracket$ : $\mathcal{E}^\ast \to
\mathcal{P}(\mathcal{E})$ denote a function that transforms a trace into the set of events it contains, that is: \\
For all $E \in \mathcal{E}^\ast: \llbracket E \rrbracket = \{ e \; |
\; e \in E \}$.\\
\end{definition}
\begin{proposition}[Union of Thread-Eventsets]
\label{prop:disjoint}
For all $E \in \mathcal{E}^\ast$:  $\llbracket E \rrbracket = \uplus_{t}  \llbracket E \downarrow t \rrbracket$.
\end{proposition}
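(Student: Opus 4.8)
The plan is to establish, separately, the two features that make the right-hand side a disjoint union: that the family $\{\llbracket E \downarrow t \rrbracket\}_{t \in \textsc{Threads}}$ is pairwise disjoint, and that its ordinary union is $\llbracket E \rrbracket$. Both facts read off the definition of $\downarrow$ as the operation that deletes from $E$ every event not of the form $(t,\mathit{act})$ with $\mathit{act} \in \mathcal{A}$, so I would not expect either to need more than a few lines.

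First I would check disjointness. Suppose $e \in \llbracket E \downarrow t \rrbracket \cap \llbracket E \downarrow t' \rrbracket$. Since $E \downarrow t$ retains only events whose first component is $t$, we have $e = (t,\mathit{act})$ for some $\mathit{act}$, and likewise $e = (t',\mathit{act}')$; equating first components gives $t = t'$. Hence distinct threads contribute disjoint event sets, so $\uplus_{t} \llbracket E \downarrow t \rrbracket$ is well defined and coincides with $\bigcup_{t} \llbracket E \downarrow t \rrbracket$.

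Next I would handle the union. The inclusion $\bigcup_{t} \llbracket E \downarrow t \rrbracket \subseteq \llbracket E \rrbracket$ holds because $E \downarrow t$ is obtained from $E$ only by deleting events, so it is a subsequence of $E$ and every event occurring in it already occurs in $E$; a short induction on $\mathit{len}(E)$ makes this precise if desired. Conversely, take $e \in \llbracket E \rrbracket$, say $e = (t,\mathit{act})$. For the traces relevant here — in the linearizability setting these contain only invoke/return events of threads, hence $t \in \textsc{Threads}$ — the defining clause of $\downarrow$ keeps $e$ in $E \downarrow t$, so $e \in \llbracket E \downarrow t \rrbracket$ and $e$ lies in the union. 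Combining this with the disjointness above yields $\llbracket E \rrbracket = \uplus_{t} \llbracket E \downarrow t \rrbracket$.

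I expect no genuinely hard step; the single point needing a word of care is the range of the index $t$. If the statement is wanted for arbitrary $E \in \mathcal{E}^\ast$ that may also contain environment events $(\mathit{env},\mathit{act})$, the union should be read over $\textsc{Threads} \uplus \{\mathit{env}\}$ with $\downarrow$ extended to $\mathit{env}$ in the obvious way; for the linearizability proof, where traces carry no environment events, ranging over $\textsc{Threads}$ is enough. Conceptually the proposition is just the observation that the first component of an event partitions $\llbracket E \rrbracket$ into fibers, the fiber over $t$ being exactly $\llbracket E \downarrow t \rrbracket$.
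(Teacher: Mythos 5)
Your proof is correct, but it takes a genuinely different route from the paper's. The paper proves the proposition by induction on $\mathit{len}(E)$: the base case gives $\varnothing=\varnothing$, and for $E = e\cdot E'$ the new event $e$ lands in exactly one summand $\llbracket E'\downarrow t\rrbracket$, after which the induction hypothesis closes the case. You instead argue directly and element-wise: pairwise disjointness because the first component of an event determines the unique $t$ with $e\in\llbracket E\downarrow t\rrbracket$, plus the two inclusions for the ordinary union. The content is essentially the same --- your ``fiber'' observation is precisely what makes the paper's inductive step work --- but your version makes the disjointness explicit, whereas the paper's proof simply writes $\uplus$ without justifying that the summands are disjoint, and its final line collapses to a near-tautology. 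Your closing caveat about the index range is also well taken and goes beyond what the paper records: $\mathcal{E}$ admits events of the form $(\mathit{env},\mathit{act})$, and for such traces the union over $t\in\textsc{Threads}$ would miss the environment events; the proposition as invoked in the linearizability proof therefore implicitly assumes environment-free traces (or a union ranging over $\textsc{Threads}\uplus\{\mathit{env}\}$), which is consistent with how it is used in Lemmas \ref{lemma:card} and \ref{lemma:exist}.
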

\begin{proof}
By induction on $\mathit{len}(E)$.\\
For $\mathit{len}(E)=0$, we have $E=\epsilon$ and  $\varnothing=\varnothing$.\\
Let $E \cdot E'$ denote the concatenation of traces $E$ and $E'$.
For $\mathit{len}(E) = n+1$, we have $E = e \cdot E'$ for some $e \in \mathcal{E}$, $E' \in \mathcal{E}^\ast$.\\
We get $\llbracket E' \rrbracket \uplus \{ e \} = \uplus_{t' \neq t}  \llbracket E' \downarrow t' \rrbracket \uplus  \llbracket E' \downarrow t \rrbracket \uplus \{ e \}$, for some t, and by the induction hypothesis: $\llbracket E' \rrbracket \uplus \{ e \} = \llbracket E' \rrbracket \uplus \{ e \}$.
\end{proof}
\begin{lemma}
\label{lemma:card}
For all $(E,i),(E',i') \in \mathcal{E}^\omega \times \mathbb{N}$:  if unique and $(E,i)  \sim_{\textsc{Threads}} (E',i')$ then $i=i'$.
\end{lemma}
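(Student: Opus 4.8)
The plan is to prove Lemma~\ref{lemma:card} by showing that under the \textbf{unique} assumption, the thread-indistinguishability relation $\sim_{\textsc{Threads}}$ preserves the multiset of events of the observed prefix, and hence its cardinality, which is exactly $i$ (respectively $i'$). First I would unfold the definition of $\sim_{\textsc{Threads}}$: $(E,i) \sim_{\textsc{Threads}} (E',i')$ means that for every thread $t \in \textsc{Threads}$ we have $(E \downharpoonright i) \downarrow t = (E' \downharpoonright i') \downarrow t$. Taking the event-set projection $\llbracket \cdot \rrbracket$ of both sides (Definition~\ref{def:setproj}) gives $\llbracket (E \downharpoonright i) \downarrow t \rrbracket = \llbracket (E' \downharpoonright i') \downarrow t \rrbracket$ for each $t$. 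Applying Proposition~\ref{prop:disjoint} (Union of Thread-Eventsets) to both $E \downharpoonright i$ and $E' \downharpoonright i'$, we get
\[
\llbracket E \downharpoonright i \rrbracket \;=\; \biguplus_{t} \llbracket (E\downharpoonright i)\downarrow t\rrbracket \;=\; \biguplus_{t} \llbracket (E'\downharpoonright i')\downarrow t\rrbracket \;=\; \llbracket E' \downharpoonright i' \rrbracket ,
\]
so the two prefixes contain the same set of events.

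Next I would turn the equality of event-sets into an equality of lengths. Here is where the \textbf{unique} hypothesis is essential: since no event repeats in $E$ (and none in $E'$), the prefix $E \downharpoonright i$ of length $i$ consists of $i$ pairwise-distinct events, so $|\llbracket E \downharpoonright i \rrbracket| = i$; likewise $|\llbracket E' \downharpoonright i' \rrbracket| = i'$. Combined with the set equality from the previous paragraph, $i = |\llbracket E \downharpoonright i \rrbracket| = |\llbracket E' \downharpoonright i' \rrbracket| = i'$, which is the claim. One minor technical point to handle carefully is that \textbf{unique} as stated quantifies over $E \in \mathcal{E}^\omega$ and positions in $\mathbb{N}$, so I should note that distinctness of events in $E$ transfers to distinctness in any finite prefix $E \downharpoonright i$, and that $i, i'$ are finite (they live in $\mathbb{N}$) so the cardinalities are finite and the counting argument is valid.

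I expect the main obstacle to be purely bookkeeping rather than conceptual: making precise that the event-set projection of a finite sequence of distinct elements has cardinality equal to the sequence length, and that Proposition~\ref{prop:disjoint} applies verbatim to the finite prefixes (it is stated for $E \in \mathcal{E}^\ast$, and $E \downharpoonright i \in \mathcal{E}^\ast$ since $i$ is finite). No case analysis or induction beyond what Proposition~\ref{prop:disjoint} already provides is needed, so the proof should be short.
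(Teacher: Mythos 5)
Your proof is correct and uses the same key ingredients as the paper's: Proposition~\ref{prop:disjoint} to assemble the event sets of the two prefixes from their (equal) per-thread projections, and the \emph{unique} assumption to convert set equality into length equality. The only difference is presentational — the paper argues by contradiction (assuming $i>i'$ and exhibiting an event in one prefix but not the other), whereas you count cardinalities directly, which if anything makes the role of \emph{unique} more explicit.
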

\begin{proof}
For a proof by contradiction we assume $i>i'$ without loss of generality. Then, by unique, there is an $e \in \mathcal{E}$ such that $e \in \llbracket E \downharpoonright i \rrbracket$ and $e \notin \llbracket E'  \downharpoonright i' \rrbracket$. By Proposition \ref{prop:disjoint}: $e \in \llbracket (E \downharpoonright i)\downarrow t \rrbracket$ for some $t \in \textsc{Threads}$ but $e \notin \llbracket (E' \downharpoonright i')\downarrow t \rrbracket$. But by  $(E,i) \sim_{\textsc{Threads}} (E',i')$, we have $(E \downharpoonright i) \downarrow t $ =  $(E' \downharpoonright i') \downarrow t$ and thus $\llbracket (E \downharpoonright i)\downarrow t \rrbracket $ =  $\llbracket (E' \downharpoonright i') \downarrow t \rrbracket$, from which we get the contradiction.\hfill$\Box$
\end{proof}
\begin{lemma}
\label{lemma:exist}
For all $(E,i),(E',i') \in \mathcal{E}^\omega \times \mathbb{N}$:  if $(E,i)  \sim_{\textsc{Threads}} (E',i')$ and $E@j=e$ for some $j \in \mathbb{N}$ such that $j \leq i$  then there is $j' \in \mathbb{N}$ such that $1 \leq j' \leq i'$ and $E'@j' = e$. 
\end{lemma}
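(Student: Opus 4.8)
The plan is to reduce the claim to the single-thread case and then exploit the fact that, in the linearizability fragment, every event is performed by some thread (there are no environment events here), so that the thread-indistinguishability relations $\sim_t$ for $t \in \textsc{Threads}$ jointly cover all of $E \downharpoonright i$. The argument closely mirrors the one used for Lemma~\ref{lemma:card}, and it is essentially a routine unfolding of the definition of $\downarrow$ and of $\sim_{\textsc{Threads}}$.

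Concretely, I would proceed as follows. First, since $j \leq i$ we have $e \in \llbracket E \downharpoonright i \rrbracket$. By Proposition~\ref{prop:disjoint} (Union of Thread-Eventsets), $\llbracket E \downharpoonright i \rrbracket = \uplus_{t} \llbracket (E \downharpoonright i)\downarrow t \rrbracket$, so there is some $t \in \textsc{Threads}$ with $e \in \llbracket (E \downharpoonright i)\downarrow t\rrbracket$. Next, from $(E,i) \sim_{\textsc{Threads}} (E',i')$ and $\sim_{\textsc{Threads}} = \bigcap_{t \in \textsc{Threads}} \sim_t$ we obtain $(E,i) \sim_t (E',i')$, that is $(E \downharpoonright i)\downarrow t = (E' \downharpoonright i')\downarrow t$, and hence $e \in \llbracket (E' \downharpoonright i')\downarrow t\rrbracket$. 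Applying Proposition~\ref{prop:disjoint} in the other direction gives $e \in \llbracket E' \downharpoonright i'\rrbracket$. By the definitions of $\llbracket\cdot\rrbracket$ and of the finite prefix, this says precisely that there is a position $j'$ with $1 \leq j' \leq i'$ and $E'@j' = e$, as required.

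There is no genuine obstacle here; the statement is a lemma of convenience. The only points that need a moment's care are (i) that Proposition~\ref{prop:disjoint} is applicable only because in this setting every $\mathit{act}$ is of the form $\mathit{inv}(m,v)$ or $\mathit{ret}(m,v)$, so every event is a thread event and the projections $E \downarrow t$ for $t \in \textsc{Threads}$ really do partition the event set; and (ii) that membership of $e$ in the finite prefix $E' \downharpoonright i'$ yields a witness index in the intended range $1 \leq j' \leq i'$, which is immediate since positions are numbered from $1$ and $E' \downharpoonright i'$ consists exactly of the first $i'$ events of $E'$. (If one preferred to avoid routing through Proposition~\ref{prop:disjoint}, one could argue directly: $e = (t,\mathit{act})$ for some $t$, hence $e$ survives the projection $(E\downharpoonright i)\downarrow t$; since $\sim_t$ equates this projection with $(E'\downharpoonright i')\downarrow t$, the event $e$ also occurs in $(E'\downharpoonright i')\downarrow t$, which is a subsequence of $E' \downharpoonright i'$, giving the desired $j'$.)
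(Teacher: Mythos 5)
Your proof is correct and follows essentially the same route as the paper's: both use Proposition~\ref{prop:disjoint} to locate $e$ in some thread projection $(E\downharpoonright i)\downarrow t$, transfer it to $(E'\downharpoonright i')\downarrow t$ via $\sim_t$, and apply the proposition in reverse to recover a witness position $j' \leq i'$. Your additional remarks on why the thread projections partition the event set and on the index range are sensible clarifications but do not change the argument.
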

\begin{proof}
Suppose $j \leq i$, $E@j=e$ and $(E,i)  \sim_{\textsc{Threads}}
(E',i')$. By Proposition \ref{prop:disjoint}, $e \in \llbracket (E
\downharpoonright i)\downarrow t \rrbracket$ for some $t \in
\textsc{Threads}$. Then because  $(E \downharpoonright i) \downarrow t
$ =  $(E' \downharpoonright i') \downarrow t$ we have $\llbracket (E
\downharpoonright i)\downarrow t \rrbracket $ =  $\llbracket (E'
\downharpoonright i') \downarrow t \rrbracket $ and thus by
proposition \ref{prop:disjoint}: $e \in \llbracket (E' \downharpoonright i') \rrbracket $ and thus by definition \ref{def:setproj}, $E'@j'=e$ for some j' with $ j' \leq  i'$.\hfill$\Box$
\end{proof}
\begin{lemma}[Existence of a Bijection]
\label{lemma:func}
For all $(E,i),(E',i') \in \mathcal{E}^\omega \times \mathbb{N}$:  if unique and $(E,i) \sim_{\textsc{Threads}} (E',i')$ then there exists a bijective function $\pi: \{1,\ldots,i\} \rightarrow  \{1,\ldots,i'\} $ and 
for all $j \in \mathbb{N}$ such that $j \leq i:   E@j = E'@\pi(j)$.
\end{lemma}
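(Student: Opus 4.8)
The plan is to build $\pi$ pointwise from the witnesses supplied by Lemma~\ref{lemma:exist} and then to argue that it is a bijection using Lemma~\ref{lemma:card} together with the \emph{unique} assumption. First I would fix $(E,i),(E',i') \in \mathcal{E}^\omega \times \mathbb{N}$ with \emph{unique} and $(E,i) \sim_{\textsc{Threads}} (E',i')$. For each $j \in \{1,\dots,i\}$, Lemma~\ref{lemma:exist} yields some $j' \in \{1,\dots,i'\}$ with $E'@j' = E@j$; by \emph{unique} applied to $E'$ this $j'$ is the only index with that property, so setting $\pi(j) := j'$ gives a well-defined function $\pi : \{1,\dots,i\} \to \{1,\dots,i'\}$ that satisfies $E@j = E'@\pi(j)$ for all $j \leq i$ by construction.

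Next I would check injectivity: if $\pi(j_1) = \pi(j_2)$ then $E@j_1 = E'@\pi(j_1) = E'@\pi(j_2) = E@j_2$, and \emph{unique} applied to $E$ forces $j_1 = j_2$. Finally, surjectivity comes essentially for free: by Lemma~\ref{lemma:card} we have $i = i'$, so $\pi$ is an injection between two finite sets of the same cardinality and hence a bijection. (Alternatively, one could construct a right inverse directly by invoking Lemma~\ref{lemma:exist} in the other direction, using that $\sim_{\textsc{Threads}}$ is symmetric, but the cardinality argument is shorter.)

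I do not expect a serious obstacle here; the only points requiring care are (i) that the witness index produced by Lemma~\ref{lemma:exist} genuinely lies in $\{1,\dots,i'\}$, so that $\pi$ has the right codomain --- which is exactly what that lemma states, including the lower bound $1 \leq j'$ --- and (ii) the degenerate case $i = 0$, where the domain is empty and $\pi$ is vacuously the required (empty) bijection, consistent with $i' = 0$ obtained from Lemma~\ref{lemma:card}. The heart of the argument is that \emph{unique} upgrades the "there exists a matching index" statements of Lemma~\ref{lemma:exist} into a genuinely functional, injective correspondence.
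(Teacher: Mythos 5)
Your proposal is correct and follows essentially the same route as the paper's proof: both construct $\pi$ pointwise from the witnesses of Lemma~\ref{lemma:exist}, use \emph{unique} on $E'$ for well-definedness and on $E$ for injectivity, and conclude surjectivity from $i=i'$ (Lemma~\ref{lemma:card}) via the finite-cardinality argument. No gaps.
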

\begin{proof}
Let $j \in \mathbb{N}$ such that $j \leq i$ and $E@j=e$. By lemma \ref{lemma:exist} we know that $E'@j'= e$ for some $j' \in \mathbb{N}$ with $j' \leq i'$. We will now show that the mapping from j to j' is a function. Suppose there was $k' \neq j'$ with $k' \in \mathbb{N}$ such that $E'@k'=e$ and $k' \leq i'$. This cannot be, since by unique each event occurs at most once in each trace. Let us denote that mapping by $\pi$. We now need to show that $\pi$ is a bijection.
By lemma \ref{lemma:card}, $i=i'$ and we have $\pi: \{ 1 \ldots i \} \to \{ 1 \ldots i \}$. This means it suffices to show that $\pi$ is injective. Now for a contradiction suppose that for $j,k \in \mathbb{N}$ with $ j,k \leq i$: $E@j= e$ and $E@k=e'$ for some $e,e' \in \mathcal{E}$
with $j \neq k$ and  thus by unique $e \neq e'$. Now let $\pi(j)=j'$ and $\pi(k)=j'$ for some $j' \in \mathbb{N}$ with $1 \leq j' \leq i'$. Then $E'@j'=e=e'$, contradicting $e \neq e'$. 
\end{proof}
\begin{landscape}
\begin{table*}
\footnotesize
\caption{Equivalence proof for linearizability: ($\rightarrow$)}
\label{proof:ltr}
\begin{center}
\fbox{
\begin{tabular}{l l l}
& Let $\preceq_{lin} \; := \; \sim_{\textsc{Threads}} \cap \preceq_{real}$ and $\mathcal{A} := \textsc{Threads} \uplus
\{ obs \}$. \\
Show:&for all $(E,i) \in  \mathcal{E}^\omega \times \mathbb{N}$: \\
&if  lin (E,i) then there is $(E',i') \in (\mathcal{E}^\omega \times \mathbb{N}):$  s.t. $(E,i)\sim_{\mathcal{A}}(E',i')$ and $(E',i') \models correct$.\\
\\
1.& $(E,i) \in  \mathcal{E}^\omega \times \mathbb{N}$ \hfill hyp.\\
2.& \tab \tab lin (E,i) \hfill hyp. \\
3.& \tab \tab $(E',i') \in \mathcal{E}^\omega \times \mathbb{N}$ and $(E,i)  \preceq_{lin} (E',i')$ and $(E',i') \models $ correct \hfill 2,def. lin\\ 
4.& \tab \tab \tab  $(e,e') \in obs(E , i)$ \hfill  hyp. \\
5.& \tab \tab \tab \tab $j,k \in \mathbb{N}$ and $E@j=e$ and $E@k=e'$ and $e \in \textsc{Ret}$ and $e' \in \textsc{Inv}$ and $j < k \leq i$ \hspace{0.5cm} \hfill 4,def.obs,def.pos\\
6.& \tab \tab \tab \tab $\pi: \{1,\ldots,i\} \rightarrow  \{1,\ldots,i'\}$ is a bijective function and $E'@\pi(j)=e$ and \\
& \tab \tab \tab \tab \tab \tab $E'@\pi(k)=e'$  and $\pi(j)<\pi(k)$ \hfill 3,5,def. $\preceq_{lin}$,def. $\preceq_{real}$ \\
7.& \tab \tab \tab \tab $\pi(j) < \pi(k) \leq i'$ \hfill 5,6,def. $\pi$ \\ 
8.& \tab \tab \tab $(e,e') \in obs(E', i')$ \hfill 5,6,7,def. obs \\
9.& \tab \tab \tab  $obs(E, i) \subseteq obs(E', i')$ \hfill 4,8,def. $\subseteq$ \\
10.& \tab \tab  \tab  $(E,i) \sim_{obs} (E',i')$ \hfill 9, def. $\sim_{obs}$ \\
11.& \tab \tab  \tab  $(E,i) \sim_{{\textsc{Threads}}} (E',i')$ \hfill 3, def. $\preceq_{lin}$\\
11.& \tab \tab  \tab $(E,i) \sim_{\textsc{A}} (E',i')$ \hfill 10,11,def. $ \sim_{\textsc{A}}$\\
12.&  \tab \tab there is $(E',i') \in (\mathcal{E}^\omega \times \mathbb{N})$  s.t. $(E,i)\sim_{\mathcal{A}}(E',i')$ and $(E',i') \models correct$ \hfill 3,11\\
13.& for all $(E,i) \in  \mathcal{E}^\omega \times \mathbb{N}$: \\
&if  lin (E,i) then there is $(E',i') \in (\mathcal{E}^\omega \times \mathbb{N}):$  s.t. $(E,i)\sim_{\mathcal{A}}(E',i')$ and $(E',i') \models correct$. \hfill 1,2,12\\
\end{tabular}
}
\end{center}
\end{table*}
\end{landscape}
\begin{landscape}
\begin{table*}
\footnotesize
\caption{Equivalence proof for linearizability: ($\leftarrow$)}
\label{proof:rtl}
\begin{center}
\fbox{
\begin{tabular}{l l l}
Show:& For all $(E,i) \in \mathcal{E}^\omega \times \mathbb{N}$:\\
&if unique then if there is $(E',i') \in \mathcal{E}^\omega \times \mathbb{N}:$  s.t. $(E,i)\sim_{\mathcal{A}}(E',i')$ and $(E',i') \models correct$  then lin (E,i). \\
\\
1.& $(E,i) \in  \mathcal{E}^\omega \times \mathbb{N}$ \hfill hyp.\\
2.& \tab for all E $\in \mathcal{E}^\omega$ and all $j,k \in \mathbb{N}$ if $E@j=E@k$ then $j=k$. \hfill hyp.\\
3.& \tab \tab $(E',i') \in \mathcal{E}^\omega \times \mathbb{N}$ and $(E,i)\sim_{\mathcal{A}}(E',i')$ and $(E',i') \models correct$ \hfill hyp.\\
4.& \tab \tab  $(E,i) \sim_{\textsc{Threads}}(E',i')$ \hfill 3,def. $\sim_{\mathcal{A}}$\\
\\
5.& \tab \tab $\pi: \{1,\ldots,i\} \rightarrow  \{1,\ldots,i'\}$ is a bijective function and \\
   & \tab \tab \tab \tab for all $j \in \mathbb{N}$ such that $1 \leq j \leq i:   E@j = E'@\pi(j)$ \hfill 1,3,4,lemma \ref{lemma:func}\\
6.&\tab \tab \tab $j,k \in \mathbb{N}$ and $E@j \in \textsc{Ret}$ and $E@k \in \textsc{Inv}$ and $j < k \leq i$ \hfill hyp.\\
7.&\tab \tab\ \tab \tab $e \in \textsc{Ret}$ and $e' \in \textsc{Inv}$ and pos(e,E) = j and pos(e',E) = k \hfill 6 \\
8.& \tab \tab \tab \tab$(e,e') \in obs(E,i)$ \hfill 6,7 \\
9.& \tab \tab \tab \tab$obs(E,i) \subseteq  obs(E', i')$ \hfill 3, def. $\sim_{\mathcal{A}}$, def. $\sim_{obs}$\\
10.& \tab \tab \tab \tab$(e,e') \in obs(E', i')$\\
11.& \tab \tab \tab \tab pos(e,E') $<$ pos(e',E') $\leq i'$\hfill 10\\
12.& \tab \tab \tab \tab for all $j \in \{ 1, \ldots ,i'\}$ there is $k \in \{ 1, \ldots, i\}$: $j = \pi(k)$ \hfill 5,$\pi$ surjective\\
13.& \tab \tab \tab \tab  $j',k' \in  \{ 1, \ldots, i\}$ and $E'@\pi(j')=e$ and $E'@\pi(k')=e'$ and $\pi(j') < \pi(k') \leq i'$ \hspace{1cm} \hfill 11,12\\
14.& \tab \tab  \tab \tab $E'@\pi(j)=e$ and $E'@\pi(k)=e'$ \hfill 5,7 \\
15.& \tab \tab \tab \tab $E'@\pi(j')= E'@\pi(j)$ and $E'@\pi(k')=E'@\pi(k)$ \hfill 13,14 \\
16.& \tab \tab \tab \tab $\pi(j')= \pi(j)$ and $\pi(k') = \pi(k)$ \hfill 2,15\\
17.& \tab \tab \tab \tab $\pi(j) < \pi(k) \leq i'$\\
18.& \tab \tab \tab for all $j,k \in \mathbb{N}$ if $E@j \in \textsc{Ret}$ and $E@k \in \textsc{Inv}$ and $j < k$ then $\pi(j) < \pi(k)$ \hfill 6,17\\
19.& \tab \tab $(E,i)\preceq_{real} (E',i')$ \hfill 5,18\\
20.& For all $(E,i) \in \mathcal{E}^\omega \times \mathbb{N}$: \\
& if unique then if there is $(E',i') \in \mathcal{E}^\omega \times \mathbb{N}:$  s.t. $(E,i)\sim_{\mathcal{A}}(E',i')$ and $(E',i') \models correct$  then lin (E,i).\hspace{0.5cm} \hfill 1-4,19
\end{tabular}
}
\end{center}
\end{table*}
\end{landscape}

\subsection{Eventual Consistency (Theorem \ref{thm:evCons})}

\label{proof:evCons}
The axiomatic and the logical description of eventual consistency are equivalent:
\begin{equation*}
\text{Show } \mathit{evCons}(E) \text{ iff } (E,\omega) \models \neg D_{\textsc{Threads}} \neg \specEVC
\end{equation*}
We restate the axiomatic definition for reference: 
\begin{definition}[Eventual Consistency]
\begin{enumerate}
\item $\prec_{v}  \subseteq \prec_{a}$ (arbitration extends visibility). 
\item $\prec_{p} \subseteq \prec_{v}$ (visibility is compatible with program-order).
\item for each $e_q = (t,\mathit{qu}(\mathit{id},q,r)) \in E$, we have r = $\mathit{apply}(\{e \; | \; e \prec_{v}  e_q \},\prec_{a},s_0)$ (consistent query results).
\item $\prec_{a}$ and $\prec_{v}$ factor over $\equiv_t $ ( atomic revisions).
\item if $(t, \mathit{com}(\mathit{id})) \not\in E$ and $(t,\_(\mathit{id},\_)) \prec_{v} (t',\_)$ then $t=t'$ (uncommitted updates).
\item if $e= (t,\mathit{com}(\mathit{id})) \in \mathit{E}$ then there are only finitely many $e' := (t',\mathit{com}(\mathit{id}'))$ such that $e' \in E$ and $e not\prec_{v} e'$ (eventual visibility).
\end{enumerate} 
\end{definition}
We will reference the parts of the definition by their numbers.
\begin{proof}
We write that $e$ is before $e'$ in $E$, if $\mathit{pos}(e,E) < \mathit{pos}(e',E) < \omega$. Let $\mathit{rev}(e)=(t,\mathit{id})$ :iff $e=(t,\_(id,\_)$. The revision number of an event $e$ is the revision $\mathit{rev}(e')$ of the next event $e':=(t,\mathit{com}(\mathit{id}))$ in $E$. We lift the relations $\prec_a$ and $\prec_v$ to revisions, i.e. $(t,id) \prec (t',id')$, if for the respective commit events $e=(t,\mathit{com}(\mathit{id}))$, and $e'=(t',\mathit{com}(\mathit{id'}))$ it holds that $e \prec e'$. Let $\mathit{align}(E,<)$ be a function that arranges a set of events $E$ according to order $<$.\\
\\
Show " $\rightarrow$": \\
Let $ E':=\mathit{align}(E,\prec_a)$, where we erase all $(\_,fwd(\_))$ events. Because $\prec_p \subseteq \prec_v \subseteq \prec_a$, we have $E \sim_\textsc{Threads} E'$. We now show that $(E',\omega) \models \specEVC$.

For each revision $(t,id)$ and for all $t' \neq t$ insert $(d,\mathit{fwd}(t,t',\mathit{id}))$ directly before the first event of revision $(t',\mathit{id'})$ if and only if $(t',id')$ is the minimal revision with respect to $\preceq_a$ such that $(t,\mathit{id}) \preceq_v (t',\mathit{id'})$ and such a revision exists.  If there are several forward events order them by $\preceq_a$ with respect to the revisions that were forwarded.\\
$(E,i) \models \mathit{atomicTrans}$, by (4), our assumption that committed revision ids match the revision ids of previous actions and our construction, as we add forward events only directly after commit events.\\
$(E',i) \models \mathit{fwd}$ follows by (1). Because visibility order must not contradict arbitration order, a revision becomes visible only after being committed. 
\\
Show $(E',\omega) \models \mathit{alive}$: Assume $(t,\mathit{com}(\mathit{id})) \in E'$ and there is $t'$ s.t. $t'$ commits infinitely often. Suppose that $(t,id)$ is never forwarded to $t'$, i.e. $(\mathit{fwd},t,t',id) \notin E'$. But then, by construction of $E'$, there are infinitely many  $(t',\_)$ such that $(t,id) \not\preceq_v (t',\_)$. This cannot be by condition (6).\\
\\
Show $(E',\omega) \models \specEVC$:\\
Assume $(E',k) \models$ query$(t,q,\mathit{res})$. Let $\mathcal{L}:= \mathit{act}(\mathit{align}(\{ e \; | \; e \preceq_v E'@k \}, \preceq_a))$. We need to show that: 
\begin{equation*}
(E',k) \models \mathcal{L} \; \mathit{validLog} \;  t \wedge \mathit{result}(q,\mathcal{L},r)
\end{equation*}
We show $(E',k) \models \mathcal{L} \; \mathit{validLog} \;  t$.
We get $(E',k) \models \mathit{consistent}(\mathcal{L})$ by the fact that we align with respect to $\prec_a$.
 Show $(E',k) \models \forall a ( \;t \; k_{log} \; a  \rightarrow a \in \mathcal{L} )$. Assume $(E',k) \models  \;t \; k_{log} \; a $, and $E'@k$ has revision id $(t,id)$. Then either there is $j$ such that $(E',j) =(t,a)$, and $a \in \mathcal{L}$, by (2), or there is $j$ such that $E@j=(d,\mathit{fwd}(t',t,\mathit{id'}))$ and there is $l < j$ such that $E@l= (t',\mathit{com}(\mathit{id'}))$, and $(E',l) \models t' k_{log} a$. By our construction, we have $(t',\mathit{id'}) \prec_v (t,\mathit{id})$. Then, by the same argument and the transitivity of $\prec_v$, we have $a \in \mathcal{L}$.\\
\\
Show $(E',k) \models \forall a (a \in \mathcal{L} \rightarrow \;t \; k_{log} \; a)$.
Assume $e \preceq_a a$ and $e \preceq_v a$. Let $a$ be in revision $(t,\mathit{id})$, and $e$ in revision $(t',id')$. Assume $t=t'$. Then $(E',k) \models t \; k_{log} \; a$ by (2) and the definition of $k_{log} $. Assume $t \neq t'$. 
Then either $(t,\mathit{id})$ is the the earliest revision of $t$ such that $(t',\mathit{id}') \preceq_v (t,id)$, and we have entered $(d,\mathit{fwd}(t',t,id'))$ before the first event of $(t,\mathit{id})$, or there is an earlier revision in which case we inserted  $(d,\mathit{fwd}(t',t,id'))$ before. By the definition of $k_{log}$, we have $(E',k) \models t \; k_{log} a$.\\
\\
Having established that $\mathcal{L}$ is valid, $(E',k) \models \mathit{result}(q,\mathcal{L},\mathit{r})$ follows by definition.\\
\\
Show "$\leftarrow$":\\
We have $E \sim_{\textsc{Threads}} E'$ and $E' \models \specEVC$. Let $e \preceq_a e'$ iff $\mathit{pos}(e,E') < \mathit{pos}(e',E')$ and $e:=(\_,a) \preceq_v e':=(t,\_)$ iff there is $\mathcal{L}$ such that $(E',\mathit{pos}(e')) \models \mathcal{L} \; \mathit{ validLog } \; t \wedge \mathit{result}(q,\mathcal{L},\mathit{res})$, and $a \in \mathcal{L}$. \\
Show $\preceq_a$ is a total order: This follows from the definition of $\mathit{pos}$ and the fact that revision ids occur only once. Show $\preceq_v$ is a partial order: (a) Show $e \preceq_v e$: This follows from the definition of $k_{log}$, as threads know their own actions. (b) Show antisymmetry: follows from the definition of $\mathit{pos}$ and the fact that transactions occur only once. (c) Show transitivity: follows from the transitivity of $\prec_p$, and the recursive definition of $k_{log}$. We now prove the individual parts of the definition. 
$(1)$: Follows by $(E',\mathit{pos}(e')) \models \mathit{consistent}(\mathcal{L})$.
$(2)$: By our definition of $k_{log}$ i.e. the fact that threads know their own actions.
$(3):$ Follows by $(E',\mathit{pos}(e')) \models \mathit{result}(q,\mathcal{L},\mathit{res})$.
$(4):$ Follows from $E' \models \mathit{atomicTrans}$.
$(5):$ by $E' \models \mathit{fwd}$, as updates are only forwarded after being committed.
$(6):$ Follows from $E' \models \mathit{alive}$. 

\end{proof}

\subsection{Knowledge about Consistency (Theorems \ref{thm:dec_seqs}
  and \ref{thm:dec_lin})}
We restate the relevant axioms for reference:
Everything a group of thread knows is also true:
$\text{(T)} := \; \models D_G \varphi \rightarrow \varphi \; \text{(Truth axiom)}$,
groups of threads know what they know:
$\text{(4)} := \; \models D_G \varphi \rightarrow D_G D_G \varphi \; \text{ (positive introspection)}$
and groups of threads know what they do not know:
$\text{(5)} := \; \models \neg D_G \varphi \rightarrow D_G \neg D_G \varphi \;  \text{ (negative introspection)}$.

Threads can decide whether a trace is sequentially consistent or not:
\begin{equation*}
  \begin{array}[t]{@{}l@{}}
    \models (\seqCons \leftrightarrow D_\mathit{Threads}(\seqCons)) \mathrel{\land}\\[\jot]
    (\neg\seqCons \leftrightarrow D_\mathit{Threads}(\neg\seqCons))
  \end{array}
\end{equation*} 
\begin{proof}
By instantiating axiom (5) with $\varphi:= \neg \specCorrect $ and $G:=\textsc{Threads}$, we get:
\[1. \models \neg D_\textsc{Threads} (\neg \specCorrect)
\mathrel{\rightarrow} 
D_\textsc{Threads} (\neg D_\textsc{Threads} (\neg \specCorrect)) \]
We get $2.$ from $1.$ by applying the definition of $\seqCons$, (i.e. $\seqCons:= \neg \DThreads (\neg correct$)):
\[ 2. \models \seqCons  \rightarrow \DThreads (\seqCons) \]
We get $3$ by instantiating (T)  with $\varphi :=
\seqCons$ and $G:=\textsc{Threads}$:
\[ 3.  \models \DThreads (\seqCons) \rightarrow \seqCons. \]
This proves the first conjunct. The second conjunct is proved in a
similar way, by instantiating (4).
\end{proof}

There are linearizable traces on which the threads together with the observer do not know that they are linearizable. There is $E \in \mathcal{E}^\infty$ such that 
\begin{equation*}
  \begin{array}[t]{@{}l@{}}
  (E,i) \models \mathit{Lin} \land \neg D_\mathit{Threads \uplus \{ \mathit{obs}\}}(\mathit{Lin})
  \end{array}
\end{equation*} 
As in sequential consistency, the threads together with the observer can spot when a trace is not linearizable.
\begin{equation*}
\models  \neg \mathit{Lin} \leftrightarrow D_\mathit{Threads  \uplus \{ \mathit{obs}\}}(\neg\mathit{Lin})
\end{equation*}
\begin{proof}
As $\preceq_\mathit{obs}$ is a partial order, axiom (5) for negative introspection is not a validity. That means its negation is satisfiable. The proof for the second claim is analogue to the case of sequential consistency.
\end{proof}


\end{document}